\newtheorem{theorem}{Theorem}
\newtheorem{lemma}[theorem]{Lemma}
\newtheorem{definition}{Definition}
\newtheorem{remark}{Remark}
\newtheorem{problem}{Problem}
\newtheorem{assumption}{Assumption}
  \let\@copyrightspace\relax
\newcommand{\AP}{\Sigma}
\newcommand{\define}{\triangleq}
\renewcommand{\cdots}{\dots}
\renewcommand{\ldots}{\dots}
\newcommand{\round}{T_{\circlearrowleft_m}}
\newcommand{\prob}{\mathrm{prob}}
\newcommand{\cond}{\mathrm{cond}}
\newcommand{\tl}{\texttt{tl}}
\newcommand{\tr}{\texttt{tr}}
\newcommand{\bl}{\texttt{bl}}
\newcommand{\br}{\texttt{br}}
\begin{document}

\conferenceinfo{HSCC}{'15 Seattle, USA }
 \title{Communication-Free Multi-Agent Control under Local Temporal Tasks and Relative-Distance Constraints}
\numberofauthors{3}
\author{
\alignauthor
Meng Guo\\
       \email{mengg@kth.se}
\alignauthor 
Jana Tumova\\
       \email{tumova@kth.se}
\alignauthor
Dimos V. Dimarogonas \\
       \email{dimos@kth.se}
\end{tabular}
\begin{tabular}{c}
\affaddr{ACCESS Linnaeus Center}\\
\affaddr{Center for Autonomous Systems}\\
\affaddr{Automatic Control Laboratory}\\
\affaddr{KTH - Royal Institute of Technology}\\
\affaddr{Osquldas v\"ag 10, SE-100 44, Stockholm, Sweden}
}

\maketitle 
\begin{abstract}
We propose a distributed control and coordination strategy for multi-agent systems where each agent has a local task specified as a Linear Temporal Logic (LTL) formula and at the same time is subject to relative-distance constraints with its neighboring agents. 
The local tasks capture the temporal requirements on individual agents' behaviors, while the relative-distance constraints impose requirements on the collective motion of the whole team. 
The proposed solution relies only on relative-state measurements among the neighboring agents without the need for explicit information exchange. 
It is guaranteed that the local tasks given as syntactically co-safe or general LTL formulas are fulfilled and the relative-distance constraints are satisfied at all time. The  approach is demonstrated with computer simulations.
\end{abstract}
\keywords{Hybrid Systems;
Multi-Agent Systems;
Autonomous Agents;
Formal Methods in Control;
Potential Fields;
LTL.}
\section{Introduction}

Cooperative control of multi-agent systems generally focuses on designing local control laws to achieve a  global goal, such as reference-tracking~\cite{6595620}, consensus~\cite{Ren05}, or formation~\cite{ji2007distributed}. In addition to these objectives, various relative-motion constraints are often imposed to achieve stability, safety and integrity of the overall system, such as collision avoidance~\cite{dimarogonas2006feedback}, network connectivity~\cite{ji2007distributed,5948318}, or relative velocity constraint~\cite{6595620}. 
This work is motivated by the desire to specify and achieve more structured and complex team behaviors than the listed ones. Particularly, following a recent trend, we consider Linear Temporal Logic (LTL) formulas as suitable descriptions of desired high-level goals. LTL allows to rigorously specify various temporal tasks, including periodic surveillance, sequencing, request-response, and their combinations. Furthermore, with the use of formal verification-inspired methods, a discrete plan that guarantees the specification satisfaction can be automatically synthesized, while various abstraction techniques bridge the continuous control problem and the discrete planning one. As a result, a generic hierarchical approach that allows for correct-by-design control with respect to the given LTL specification has been formulated and largely employed during the last decade or so in single-agent as well as multi-agent settings  \cite{Din11, Fil12, Guo-icra14, Klo11, loizou-cdc2005,  Lyg13, saha14,  Jana14, Ulu13}.



In temporal logic-based multi-agent  control, two different points of view can be taken: a top-down and a bottom-up. In the former one, a global specification captures requirements on the overall team behavior. Typically, the focus of decentralization is on  decomposing the specification into tasks to be executed by the individual agents in a  synchronized~\cite{Din11} or partially synchronized~\cite{Klo11, Ulu13} manner. A central monitoring unit then ensures that the composition of the local plans yields the satisfaction of the global goal. 

In contrast, in bottom-up approach, each agent is assigned its own local task. The tasks can be independent ~\cite{Fil12, Guo-icra14} or partially dependent, involving requests for collaboration with the others~\cite{Jana14}. A major research interest here is the decentralization of  planning and control procedures. For instance, in~\cite{Guo-icra14}, a decentralized revision scheme is suggested for a team in a partially-known workspace. In~\cite{Fil12}, gradual verification is employed to ensure that independent LTL tasks remain mutually satisfiable while avoiding collisions.  In~\cite{Jana14}, a receding horizon approach is employed to achieve partially decentralized planning for  collaborative tasks. In~\cite{saha14}, the authors propose a compositional motion planning
framework for multi-robot systems under safe LTL specifications.




In this work, we tackle the multi-agent control problem under local LTL tasks from the bottom-up perspective. Even though the local tasks are mutually independent, the agents within a multi-agent group are often more than a collection of stand-alone systems. Instead, they are subject to dynamic constraints with their neighboring agents and in such a case, integration of the continuous motion control with the high-level discrete network structure control is essential~\cite{6595620,5948318}. Here, the agents are subject to relative-distance constraints that need to be satisfied at all times. This coupling constraints make the team of agents {competitive} as each agent has to satisfy its local task and at the same time {cooperative} as they have to maintain the relative distance within the team.
We addressed a version of this problem in~\cite{Guo-cdc14}, where we proposed a dynamic leader-follower coordination and control scheme as a solution. In this work, however, we aim for a fully decentralized and communication-free solution that is applicable, e.g.,  to low-cost robotic systems equipped with range and angle sensors, but without communication units. 
Our solution consists of three ingredients: a~standard discrete plan synthesis algorithm, a decentralized, hybrid, potential-field-based motion controller with two different control modes and a switching strategy between the two different continuous control modes.

In summary, we propose a fully communication-free decentralized hybrid control scheme for multi-agent systems under both complex high-level local LTL tasks and low-level relative-distance constraints. Specifically, our main contribution is the  design of a two-mode communication-free control law that brings a group of agents to a region of interest.

The rest of the paper is organized as follows. Sec.~\ref{sec:prelims}  introduces  preliminaries. Sec.~\ref{sec:pf}  formalizes the considered problem. Sec.~\ref{sec:solution} presents our solution in details. Sec.~\ref{sec:example} demonstrates the results in simulations. We conclude in~Sec.~\ref{sec:conc}.

\section{Preliminaries}\label{sec:prelims}

{\emph{Linear Temporal Logic (LTL) formula} over a set of \emph{atomic propositions} $\AP$ that can be evaluated as true or false is defined inductively according to the following rules:}
{
\begin{itemize}\itemsep-0.5ex
\item an atomic proposition $\sigma \in \AP$ is an LTL formula;
\item if $\varphi$ and $\psi$ are LTL formulas, then also $\neg \varphi$, $\varphi \wedge \psi$, $\bigcirc \varphi$, $\varphi \, \textsf U \, \psi$, $\Diamond \, \varphi$, and $\square \, \varphi$ are LTL formulas, 
\end{itemize}
where $\neg $ (\emph{negation}), $\wedge$ (\emph{conjunction}) are standard Boolean connectives and $\bigcirc$ (\emph{next}), $\textsf U$ (\emph{until}), $\Diamond$ (\emph{eventually}), and $\square$~(\emph{always}) are temporal operators.}
The semantics of LTL is defined over the infinite words over~$2^\AP$. Informally, $\sigma \in \AP$ is satisfied on a word $w = w(1)w(2)\ldots$ if $\sigma \in w(1)$. Formula $\bigcirc \, \varphi$ holds true if $\varphi$ is satisfied on the word suffix that begins in the next position $w(2)$, whereas $\varphi_1 \, \textsf{U}\, \varphi_2$ states that $\varphi_1$ has to be true until $\varphi_2$ becomes true. Finally, $\Diamond \, \varphi$ and $\square \, \varphi$ are true if $\varphi$ holds on $w$ eventually and always, respectively. {For full details, see e.g.,~\cite{Bai08}.}

{\emph{Syntactically co-safe LTL (sc-LTL)} is a subclass of LTL built without the \emph{always} operator $\Box$ and with the restriction that the \emph{negation} $\neg$ can be applied to atomic propositions only~\cite{Kup01}. In contrast to general LTL formulas, the satisfaction of an sc-LTL time can be achieved in a finite time, i.e., each word satisfying an sc-LTL formula $\varphi$ consists of a \emph{satisfying prefix} that can be followed by an arbitrary suffix.}{}

\medskip

{\emph{An undirected weighted graph} 
is a tuple $G=(\mathcal{N},{E}, {W})$}, where 
$\mathcal{N}=\{1,\ldots,N\}$ is a set of {nodes};
${E} \subseteq \mathcal{N}\times \mathcal{N}$ is a set of \emph{edges}; and
${W}:{E}\rightarrow \mathbb{R}^+$ is the weight {function}.
Each node~{$i$} has a {set of \emph{neighbors}} $\mathcal{N}_i=\{j\in \mathcal{N}\,|\,(i,j)\in {E}\}$.
A path from node $i$ to $j$ is a sequence of nodes starting with $i$ and ending with $j$ such that the consecutive nodes are neighbors. 
$G$ is \emph{connected} if there is a path between any two nodes and $G$ is \emph{complete} if $E=\mathcal{N}\times \mathcal{N}$. 
The Laplacian matrix  $\textbf{H}$ of $G$ is an $N\times N$ positive semidefinite matrix:
$\textbf{H}(i,i)=\sum_{j\in \mathcal{N}_i}W(i,j), \forall i\in \mathcal{N}$; 
 $\textbf{H}(i,j)=W(i,j)$, $\forall (i,j)\in {E}$, and  $\textbf{H}(i,j)=0$ otherwise. 
{For a connected graph $G$,}  $\textbf{H}$ has nonnegative eigenvalues~\cite{Godsil} and a single zero eigenvalue with the  eigenvector $\textbf{1}_N$, where $\textbf{1}_N=[1,\ldots,1]^T$. 

\medskip

{In this paper, each vector norm over $\mathbb{R}^n$ is the Euclidean norm~\cite{horn2012matrix}}. 
{We use $|S|$ to denote the cardinality of a set $S$ and $v[i]$ to denote the $i$-{th} element of a vector $v$.}

\section{Problem Formulation}\label{sec:pf}
\subsection{Agent Dynamics and Network Structure}\label{system}

We consider a team of $N$ autonomous agents with unique identities (IDs) {$i\in\mathcal{N}=\{1,\ldots,N\}$}. They all satisfy the single-integrator dynamics: 
\begin{equation}\label{dynamics}
\dot{x}_i(t) \triangleq u_i(t), \qquad i\in \mathcal{N}
\end{equation}
where $x_i(t), \, u_i(t) \in \mathbb{R}^2$ are the {respective} state and the control input of agent~$i$ at time $t> 0${, and} $x_i(0)$ is the given {initial} state. The agents are modeled as point masses without volume, {i.e.,} inter-agent collisions are not considered.

Each agent has a sensing radius $r>0$, which is assumed to be identical for all agents. Namely, each agent can only observe 
another agent' state if their relative distance is less than $r$. 
Thus, given $\{x_i(0), i\in \mathcal{N}\}$, we 
define the undirected graph 
$G_0\triangleq(\mathcal{N}, E_0)$, where 
$(i,\,j)\in E_0$ if $\|x_i(0)-x_j(0)\|<r$. 
{We assume that the initial graph $G_0$ is connected.}

\subsection{{Task Specifications}}\label{taskspec}
Within the 2D workspace, each agent $i\in \mathcal{N}$ has a set of  {$M_i$} regions of interest: $\Pi_i\triangleq\{\varpi_{i1},\ldots,\varpi_{iM_i}\}$.
These regions can be  {of} different shapes, such as spheres, triangles, 
or polygons. For simplicity of presentation, $\varpi_{i\ell}\in \Pi_i$ is here {represented by} a circular area around a point of interest:
\begin{equation}\label{regions}
\varpi_{i\ell} = \mathcal{B}(c_{i\ell}, r_{i\ell})=\{y\in \mathbb{R}^2 \mid \|y-c_{i\ell}\|\leq r_{i\ell}\},
\end{equation}
where $c_{i\ell}\in \mathbb{R}^2$ is the center;
$r_{i\ell}\geq r_{\min}$ is the radius, and $r_{\min} > 0$ is a given minimal radius for all regions. 
{We assume that the regions of interest do not intersect  and that the workspace is bounded, which {imply the following assumption} necessary for the design of the agents' controllers:}
\begin{assumption}\label{region-assump}
(I) $\|c_{i\ell_i}-c_{j\ell_j}\|>2\,r_{\min}$, $\forall i,\, j\in \mathcal{N}$, $\forall \varpi_{i\ell_i} \in \Pi_i $ and $\forall \varpi_{j\ell_j} \in \Pi_j$. 
(II) $\|c_{i\ell}\|<c_{\max}$, $\forall i\in \mathcal{N}$ and $\forall \varpi_{i\ell}\in \Pi_i$, where $c_{\max}>0$ is a given constant. 
\end{assumption}

{Each region of interest is associated with a subset of atomic propositions $\Sigma_i$ through
the labeling function $L_i:\Pi_i\rightarrow 2^{\AP_i}$. }
Without loss of generality, we assume that $\AP_i \cap \AP_j = \emptyset$, for all $i,j \in \mathcal N$ such that $i \neq j$. 
{We view the atomic propositions $L_i(\varpi_{i\ell})$ as \emph{services} that the agent $i$ can provide when being present in region $\varpi_{i\ell} \in \Pi_i$. Hence,} 
{upon the visit to} $\varpi_{i\ell}$, the agent~$i$ {chooses among $L_i(\varpi_{i\ell})$ the subset of atomic propositions to be evaluated as true (i.e., the subset of services it provides among the available ones).
{We denote} by $\mathbf{x}_i(T)$ the \emph{trajectory} of agent $i$ during the time interval {$[0,T)$}, where $T>0$ and $T$ can be infinity. 
{The trajectory $\mathbf{x}_i(T)$ is associated with a unique {finite or infinite} sequence $\mathbf{p}_i(T) \define \pi_{i1}\pi_{i2}\dots$ of regions in $\Pi_i$ that the agent $i$ crosses, and with a finite or infinite sequence of time instants $t_{i0}'t_{i1}t_{i1}' t_{i2} t_{i2}'\cdots$ when $i$ enters/leaves the respective regions. {Formally,} for all $k \geq 1$:
$0 = t_{i0}' \leq t_{ik}\leq t_{ik}'< t_{ik+1} < T$, $x_i(t) \in \pi_{ik}$, $\pi_{ik}\in\Pi_i$, $\forall t\in [t_{ik},\,t_{ik}']$, and $x_i(t) \notin \varpi_{i\ell}$, $\forall \varpi_{i\ell} \in \Pi_i$ and $\forall t\in (t_{ik-1}',\,t_{ik})$.}
{The \emph{trace} corresponding to $\mathbf{x}_i(T)$ is a sequence of labels of the visited regions
$\textup{\texttt{trace}}_i(T) \define L_i(\pi_{i1})L_i(\pi_{i2})\cdots, $
which represents the sequence of atomic propositions that \emph{can} be true (i.e., the services that can be provided) by the agent $i$ following $\mathbf{x}_i(T)$.} 

{The \emph{word} the agent $i$ produces is a sequence of atomic propositions that actually \emph{are} evaluated as true (i.e., the actually provided services). Note that the agent's word and trajectory have to comply: if $\textup{\texttt{trace}}_i(T)$ is as above, then $\textup{\texttt{word}}_i(T) = w_{\ell_1}w_{\ell_2}\cdots$, where $w_{\ell_k} \subseteq L_i(\pi_{\ell_k})$, for all $k \geq 1$.}

\medskip


{The specification of the local task for
each agent $i\in \mathcal{N}$
is given  as a general LTL or an sc-LTL formula} $\varphi_i$ over $\AP_i$ and captures requirements on the services to be provided. In this work, we do not focus on how the service providing is executed by an agent; we only aim at controlling an agent's motion to reach regions where these services are available.
{Formally, an infinite trajectory $\mathbf{x}_i(T)$ of an agent $i$ satisfies a given formula $\varphi_i$ if and only if there exists an infinite word $\textup{\texttt{word}}_i(T)$ that complies with $\mathbf{x}_i(T)$ and satisfies $\varphi_i$.}

 
\subsection{Problem Statement}
\begin{problem}\label{main-prob}
{Given a team of $N$ agents as in Sec.~{\ref{system}}, and their task specifications as in Sec.~\ref{taskspec}, design distributed control laws $u_i$, $\forall i\in \mathcal{N}$, {such that for $T \rightarrow \infty$}:
\begin{itemize} \itemsep-0.5ex
\item[(1)] $\mathbf{x}_i(T)$ satisfies $\varphi_i$; and
\item[(2)] $\|x_i(t)-x_j(t)\|<r$, $\forall (i,\,j)\in E_0$, $\forall t\in [0,\, T)$.
\end{itemize}}


\end{problem}

\section{Solution}\label{sec:solution}

{The proposed solution consists of three layers: 
(i) 
an offline synthesis of an discrete plan, i.e., a sequence of progressive goal regions for each agent; 
(ii) 
a distributed continuous control scheme guaranteeing that one of the agents reaches its progressive goal region in finite time while the relative-distance constraints are fulfilled at all time;  
(iii) 
a hybrid control layer, which monitors the discrete plan execution and switches between different continuous control modes to achieve the satisfaction of each LTL task. }

\subsection{Discrete Plan Synthesis}\label{synthesis}

{The discrete plan can be generated using standard techniques leveraging ideas from automata-based formal verification. Loosely speaking, an LTL or an sc-LTL formula $\varphi_i$ is first translated into a B\"uchi or a finite automaton, respectively. The automaton is viewed as a graph and analyzed using graph search algorithms. As a result, a word that satisfies $\varphi_i$ is obtained and mapped onto the sequence of regions to be visited. Current temporal logic-based discrete plan synthesis algorithms can accommodate various environmental constraints and advanced plan optimality criteria. We refer the interested reader to related literature, e.g.,~\cite{Bai08,Din11}.

}


{It can be shown that without loss of generality, the derived plan of an agent $i$ is in a prefix-suffix form} 
$\tau_i = \tau_{i,\textup{pre}}(\tau_{i,\textup{suf}})^\omega$,
where {$\tau_{i,\textup{pre}}=(\pi_{i1},w_{i1}) \cdots(\pi_{i{k_i}},w_{ik_i})$ is the plan prefix, and $\tau_{i,\textup{suf}}=(\pi_{i{{k_i}+1}},w_{i{k_i}+1}) \cdots(\pi_{i{K_i}},w_{iK_i})$ is the periodical plan suffix}; 
$\pi_{ik}\in \Pi_i$ and $w_{ik}\subseteq L_i(\pi_{ik})$, $\forall k=1,\cdots, K_i$. 
Simply speaking, $\tau_i$ represents the sequence of \emph{progressive goal regions} $\pi_{i1}\pi_{i2}\cdots$ and the {word, i.e., the sequence of services to be provided there $w_{i1}w_{i2}\cdots$ that satisfies $\varphi_i$}. {If $\{\varphi_i, i\in \mathcal{N}\}$ are all sc-LTL formulas, $\tau_{i,\textup{pre}}$ represents the satisfying prefix and the suffix $\tau_{i,\textup{suf}}$ can be disregarded.}

\subsection{Continuous Controller Design}\label{continuous-design}

Before stating the proposed control scheme, let us first introduce the notion of connectivity graph, which will allow us to handle the relative-distance constraints.
Recall that each agent has a limited sensing radius $r>0$ as mentioned in Sec.~\ref{system}. Let $\delta \in (0,\,r)$ be a given constant. 
Then we define the connectivity graph $G(t)$ as follows:
\begin{definition}\label{edge}
Let $G(t)\triangleq(\mathcal{N}, E(t))$ denote the undirected time-varying connectivity graph at time $t\geq 0$, where $E(t)\subseteq \mathcal{N}\times \mathcal{N}$is the set of edges. 
(I) $G(0)=G_0$;
(II)~At time $t > 0$, $(i,\, j)\in E(t)$ iff one of the following conditions hold: (1) $\|x_i(t)-x_j(t)\|\leq r -\delta $; or (2)~$r -\delta<\|x_i(t)-x_j(t)\|\leq r $ and $(i,j) \in E(t^-)$, where $t^-<t$ and $|t-t^-|\rightarrow 0$.
\end{definition}

Note that the condition (II)  above guarantees that a new edge will only be added when the distance between two unconnected agents decreases below $r -\delta$. 
In other words, there is a hysteresis effect when adding new edges to the connectivity graph. 
Each agent $i\in \mathcal{N}$ has a time-varying set of neighbors ${\mathcal{N}}_i(t)=\{j\in \mathcal{N}\,|\,(i,\,j)\in E(t)\}$. 
Note that the   graph $G_0$ defined in Sec.~\ref{system} is assumed to be connected.

{Given that the progressive goal region at time $t$ is $\pi_{ig}= \mathcal{B}(c_{i\text{g}}, r_{i\text{g}})\in \Pi_i $}, we propose the following two control modes:

\noindent (1) the \emph{active} mode:
\begin{equation}\label{law1}
\textbf{C}_{act}: \quad u_i(t)\triangleq -\displaystyle  d_i\, p_i -\sum_{j\in \mathcal{N}_i(t)}h_{ij}\,x_{ij},
\end{equation}
(2) the \emph{passive} mode:
\begin{equation}\label{law2}
\textbf{C}_{pas}: \quad u_i(t)\triangleq -\sum_{j\in \mathcal{N}_i(t)}h_{ij}\,x_{ij},
\end{equation}
where $x_{ij}\triangleq x_i-x_j$; $p_i\triangleq x_i-c_{i\text{g}}$; 
and the coefficients $d_{i}$ and $h_{ij}$ are given by 
\begin{equation}\label{d}
d_i \triangleq \frac{{\varepsilon^3}}{(\|p_i\|^2+\varepsilon)^2}+\,\frac{\varepsilon^2}{2\,(\|p_i\|^2+\varepsilon)}; 
h_{ij}\triangleq  \frac{r^2}{(r^2-\| x_{ij}\|^2)^2},
\end{equation}
where  $\varepsilon>0$ is a design parameter to be appropriately tuned. 
We show in detail how to choose $\varepsilon$ in the sequel. 
Note that both controllers in~\eqref{law1} and~\eqref{law2} are nonlinear and rely on only locally-available states: $x_i(t)$ and $\{x_j(t),j\in \mathcal{N}_i(t)\}$.

Assume that $G(T_s)$ is connected at time $T_s>0$. Moreover, assume that there are $N_a\geq 1$ agents within $\mathcal{N}$ that are in the \emph{active} mode obeying~\eqref{law1} with its goal region as $\pi_{i\text{g}}=\mathcal{B}(c_{i\text{g}},\,r_{i\text{g}})\in \Pi_i$; and the rest $N_p=N-N_a$ agents that are in the \emph{passive} mode obeying~\eqref{law2}. 
For simplicity, denote by the group of active and passive agents $\mathcal{N}_a, \mathcal{N}_p \subseteq \mathcal{N}$ respectively. 
Note that it is allowed that $N_a=N$ and $N_p=0$ when all agents are in the active mode. 

In the rest of this section, we show that for \emph{any} allowed combination of $N_a > 1$ and $N_p < N$, by following the control laws~\eqref{law1} and~\eqref{law2},  \emph{one} active agent reaches its goal region within finite time $T_f\in (T_s,\,+\infty)$, while $\|x_i(t)-x_j(t)\|<r$, $\forall (i,\,j)\in E(T_s)$ and $\forall t\in [T_s,\,T_f]$.

\subsubsection{Relative-Distance Maintenance}\label{subsubsec:distance}
In this part, we show that the relative-distance constraints are always satisfied by following the control laws~\eqref{law1} and~\eqref{law2} for \emph{any} number of active and passive agents within the system following a potential-field based analysis.
We propose the following potential-field function:
\begin{equation}\label{Lyapunov}
V(t)\triangleq\frac{1}{2}\sum_{i\in \mathcal{N}}\sum_{j\in \mathcal{N}_i(t)} \phi_c (x_{ij} ) +b_i\sum_{i\in \mathcal{N}}\phi_g (x_{i} )
\end{equation}
where $\phi_c(\cdot)$ stands for an attractive potential to agent $i$'s neighbors and is given by:
\begin{equation}\label{potentialc}
\phi_c (x_{ij} )\triangleq\frac{1}{2}\,\frac{\| x_{ij}\|^2}{r^2-\| x_{ij}\|^2}, \qquad \|x_{ij}\|\in [0,\, r-\delta);
\end{equation}
while $\phi_g(\cdot)$ is an attractive force to agent $i$'s goal defined by:
\begin{equation}\label{potentialg}
\phi_g (x_{i} )\triangleq\frac{\varepsilon^2}{2}\,\frac{{\| p_i\|^2}}{\| p_i\|^2+\varepsilon} + \frac{\varepsilon^2}{4} \, \ln(\|p_i\|^2+\varepsilon),
\end{equation}
where function $\ln(\cdot)$ is the natural logarithm; $b_i\in \mathbb{B}$ indicates the agent $i$'s control mode. Namely, $b_i=1$, $\forall i\in \mathcal{N}_a$ and $b_i = 0$, $\forall i\in \mathcal{N}_p$. 
It can be verified that the gradient of $V(t)$ from~\eqref{Lyapunov} with respect to $x_i$ is given by 
\begin{equation}\label{gradient}
\begin{split}
\nabla_{x_i} V=\frac{\partial V}{\partial x_i}  &=  \nabla_{x_i}\phi_g (x_i )+\sum_{j\in \mathcal{N}_i} \nabla_{x_i}\phi_c (x_{ij} )\\
&= b_i\, d_i\, p_i+\sum_{j\in \mathcal{N}_i(t)} h_{ij}\,x_{ij}=-u_i.
\end{split}
\end{equation}

\begin{theorem}\label{connectivity}
$G(t)$ remains connected and no existing edges within $E(T_s)$ will be lost, namely $E(T_s)\subseteq E(t)$, $\forall t\geq T_s$.
\end{theorem}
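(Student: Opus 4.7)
The plan is to use the function $V(t)$ from~(\ref{Lyapunov}) as a Lyapunov-like certificate, exploiting the fact that the pairwise term $\phi_c(x_{ij})$ is a barrier: $\phi_c(x_{ij})\to\infty$ as $\|x_{ij}\|\to r$. If I can show that $V$ remains uniformly bounded on $[T_s,\infty)$, then no edge currently present can ever approach the critical distance $r$, which yields both parts of the claim.

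Between two consecutive edge-addition events the neighbour sets $\mathcal{N}_i(t)$ are constant, so $V$ is a smooth function of $\{x_i\}$. Using the identity $\nabla_{x_i}V=-u_i$ established in~(\ref{gradient}), I would differentiate along trajectories to obtain
\begin{equation*}
\dot V(t)=\sum_{i\in\mathcal N}\nabla_{x_i}V\cdot\dot x_i=-\sum_{i\in\mathcal N}\|u_i(t)\|^2\le 0,
\end{equation*}
so $V$ is non-increasing on each such sub-interval. At an edge-addition event, however, $V$ jumps upward by precisely the newly contributed $\phi_c(x_{ij})$ term. Condition~(II) of Definition~\ref{edge} forces the new edge to satisfy $\|x_{ij}\|\le r-\delta$ at the instant of formation, so each such jump is bounded by the finite constant $\phi_c(r-\delta)$, and since $|E(t)|\le\binom{N}{2}$ only finitely many additions can ever occur. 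Combining gives the uniform a~priori bound $V(t)\le V(T_s)+\binom{N}{2}\,\phi_c(r-\delta)=:V_{\max}$.

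The theorem then follows by contradiction. Suppose some $(i^*,j^*)\in E(T_s)$ were lost from $E(t)$ for the first time at some $t^*>T_s$. By Definition~\ref{edge} this requires $\|x_{i^*j^*}(t)\|\to r$ as $t\to t^{*-}$; but the edge is still present throughout $[T_s,t^*)$, so the divergent summand $\phi_c(x_{i^*j^*}(t))$ sits inside $V(t)$ on that interval, forcing $V(t)\to\infty$ and contradicting the bound $V_{\max}$. Hence $E(T_s)\subseteq E(t)$ for every $t\ge T_s$, and since $G(T_s)$ is connected by assumption, so is $G(t)$. The main obstacle is the hybrid bookkeeping: $V$ is not globally monotone because of the jumps at edge-addition events, so one must combine the intra-interval Lyapunov decay with a finite total-jump estimate, using the hysteresis gap $\delta$ to keep each jump bounded away from the barrier.
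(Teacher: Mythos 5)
Your proof is correct and takes essentially the same route as the paper's: show $\dot V=-\sum_{i\in\mathcal N}\|u_i\|^2\le 0$ between switching events via $\nabla_{x_i}V=-u_i$, bound the upward jumps at edge additions by a finite quantity using the hysteresis gap $\delta$, and conclude by contradiction from the barrier property $\phi_c(x_{ij})\to\infty$ as $\|x_{ij}\|\to r$. Your explicit uniform bound $V(t)\le V(T_s)+\binom{N}{2}\phi_c(r-\delta)$ is, if anything, a slightly cleaner packaging of the same argument.
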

\begin{proof}
Assume that the network $G(t)$ remains \emph{invariant} during the time period $[t_1,\, t_2)\subseteq [T_s,\,\infty)$. 
Thus the neighboring sets $\{\mathcal{N}_i, i\in \mathcal{N}\}$ also remain invariant and $V(t)$ is differentiable for $t\in [t_1,\, t_2)$. 
Then the time derivative of $V(t)$ is given by 
\begin{equation}\label{derivative}
\begin{split}
\dot{V}(t)&=\sum_{i\in \mathcal{N}} \big( \nabla_{x_i} V\big)^T \, \dot{x}_i=\sum_{i\in \mathcal{N}} \big( \nabla_{x_i} V\big)^T \, u_i\\
&=-\sum_{i\in \mathcal{N}} \big{\|}b_i\,d_i\, p_i +\sum_{j\in \mathcal{N}_i(t)}h_{ij}\, x_{ij}  \big{\|}^2\leq 0,
\end{split}
\end{equation}
meaning that $V(t)$ is non-increasing, $\forall t\geq T_s$. Thus $V(t)\leq V(T_s)<+\infty$ for $t\geq T_s$. 

On the other hand, assume a \emph{new} edge $(p,\,q)$ is added to $G(t)$ at $t=t_2$, where $p,\, q\in \mathcal{N}$. 
By Def.~\ref{edge}, $\|x_{pq}(t_2)\|\leq r-\delta$ and $\phi_c(x_{pq}(t_2))=\frac{(r-\delta)^2}{\delta(2r-\delta)}<+\infty$ since $0<\varepsilon<r$. 
Denote by $\widehat{E}\subset \mathcal{N}\times \mathcal{N}$ the set of newly-added edges at $t=t_2$. Let $V(t_2^+)$ and $V(t_2^-)$ be the value of $V(t)$ before and after adding the set of new edges to $G(t)$ at $t=t_2$. We get
$
V(t_2^+) = V(t_2^-) + \sum_{(p,\,q)\in \widehat{E}}\phi_c(x_{pq}(t_2))
\leq  V(t_2^-) + |\widehat{E}|\, \frac{(r-\delta)^2}{\varepsilon(2r-\delta)}<+\infty, 
$
where we use the fact that $|\widehat{E}|$ is bounded as $\widehat{E}\subset \mathcal{N}\times \mathcal{N}$. Thus $V(t)<+\infty$ also holds when new edges are added. 
Similar analysis can be found in~\cite{ji2007distributed}.

As a result, $V(t)<+\infty$ for $t\in [T_s,\, \infty)$. By Def.~\ref{edge}, one existing edge $(i,\,j)\in E(t)$ will be lost only if $x_{ij}(t)=r$. It implies that $\phi_c(x_{ij})\rightarrow +\infty$, i.e., $V(t)\rightarrow +\infty$ by~\eqref{Lyapunov}. By contradiction,  we can conclude that new edges might be added but no existing edges will be lost, namely $E(T_s)\subseteq E(t)$, $\forall t\geq T_s$. If $G(T_s)$ is connected, then $G(t)$ remains connected for $\forall t\geq T_s$. It completes the proof. 
\end{proof}
Note that Theorem~\ref{connectivity} holds also when $N_a=0$, i.e., there are no active agents, as \eqref{derivative} still holds  when $b_i=0$, $\forall i\in \mathcal{N}$.

\subsubsection{Convergence Analysis}\label{subsubsec:convergence}
In this part, we aim at analyzing the convergence properties of the closed-loop system. 
Since we have shown that $V(t)$ is non-increasing for all $t>T_s$ by Theorem~\ref{connectivity} above, by LaSalle's invariance principle~\cite{khalil2002nonlinear} we only need to find out the largest invariant set that $\dot{V}(t)=0$, which implies:
\begin{equation}\label{equi}
b_i\, d_i\,p_i +\sum_{j\in \mathcal{N}_i(t)}h_{ij}\,x_{ij} =0, \quad \forall i\in \mathcal{N}.
\end{equation}
Then we can construct one $N\times N$ diagonal matrix $\textbf{D}$ that $\textbf{D}(i,i)=b_i\,d_i$, $\forall i\in\mathcal{N}$ and $\textbf{D}(i,j)=0$, $i\neq j$ and $i,j\in \mathcal{N}$. 
and another  $N\times N$ matrix $\textbf{H}$ that $\textbf{H}(i,i)=\sum_{j\in \mathcal{N}_i}h_{ij}$, $\forall i\in \mathcal{N}$ and $\textbf{H}(i,j)=-h_{ij}$, $i\neq j$ and $\forall(i,\,j)\in E(t)$ while $\textbf{H}(i,j)=0$, $\forall (i,\,j)\notin E(t)$. Note that $h_{ij} > 0$ as $\|x_{ij}\|\in [0,\,r)$ by~\eqref{gradient}, $\forall (i,\,j)\in E(t)$. 
As a result, $\textbf{H}$ is the Laplacian matrix of the graph $G(t)=(\mathcal{N}, E(t), W)$, where $W(i,\,j)=h_{ij}$, $\forall (i,\,j)\in E(t)$.
Then~\eqref{equi} can be written in vector form:
\begin{equation}\label{equi2}
\textbf{H}\otimes \textbf{I}_2 \cdot \mathbf{x} + \textbf{D} \otimes \textbf{I}_2 \cdot (\mathbf{x}-\mathbf{c})=0,
\end{equation}
where $\otimes$ is the Kronecker product~\cite{horn2012matrix}; $\mathbf{x}$ is the stack vector for $x_i$, $i\in \mathcal{N}$ and $\mathbf{x}[i]=x_i$; 
$\textbf{I}_2$ is the identity matrix; $\mathbf{c}$ is the stack vector that $\mathbf{c}[i]=c_{i\text{g}}$ if $i\in \mathcal{N}_a$ and $\mathbf{c}[i]=\textbf{0}_2$ if $i\in \mathcal{N}_p$. Let $\mathcal{C}$ be the set of  critical points satisfying~\eqref{equi2}, i.e.,
\begin{equation}\label{equiset}
\mathcal{C}\triangleq\{x\in \mathbb{R}^{2N}\,|\, \textbf{H}\otimes \textbf{I}_2 \cdot \mathbf{x} + \textbf{D} \otimes \textbf{I}_2 \cdot (\mathbf{x}-\mathbf{c})=0\}.
\end{equation}
Now we show that at the critical points all agent relative distances can be made arbitrarily small by reducing $\varepsilon$ and the corresponding connectivity graph is a complete graph.

\begin{lemma}\label{difference}
For all critical points $\mathbf{x}_c\in \mathcal{C}$, (I) $\|x_{ij}\|$ can be made arbitrarily small by reducing $\varepsilon$, $\forall (i,\,j)\in E(t)$;
(II) there exists $\varepsilon_0>0$ such that if $\varepsilon<\varepsilon_0$, then the connectivity graph $G(t)$ is complete.
\end{lemma}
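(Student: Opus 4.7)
The plan is to exploit the fact that the ``active'' term $b_i d_i p_i$ in \eqref{equi} is uniformly small in $\varepsilon$, so that the critical equation is a small perturbation of the pure consensus equation $\mathbf{H}\mathbf{y}=0$, whose only solutions (for connected $G$) are the agreement configurations $x_i\equiv\bar x$.

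\medskip

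\emph{Step 1: Uniform bound $\|d_i p_i\|=O(\varepsilon^{3/2})$.} The first thing I would establish is that, independently of $\|p_i\|$,
\[
\|d_i\, p_i\|\;\le\; \frac{3}{2}\,\varepsilon^{3/2}.
\]
This requires a two-regime case analysis on $\|p_i\|^2$ vs.\ $\varepsilon$. In the small regime $\|p_i\|\le\sqrt{\varepsilon}$, use $d_i\le 3\varepsilon/2$ (from $(\|p_i\|^2+\varepsilon)^2\ge\varepsilon^2$ and $(\|p_i\|^2+\varepsilon)\ge\varepsilon$) and multiply by $\|p_i\|\le\sqrt\varepsilon$. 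In the large regime $\|p_i\|>\sqrt{\varepsilon}$, use $(\|p_i\|^2+\varepsilon)\ge\|p_i\|^2$, so each of the two summands of $d_i\|p_i\|$ is bounded by $\varepsilon^{3/2}$ (resp.\ $\varepsilon^{3/2}/2$). Crucially the bound does \emph{not} require any a priori bound on $\|p_i\|$ — this is the main obstacle, since naively $d_i\le 3\varepsilon/2$ only gives $O(\varepsilon\,\|p_i\|)$, which could be large.

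\medskip

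\emph{Step 2: Reformulate \eqref{equi2} in the disagreement coordinates.} Let $\bar x\define\frac{1}{N}\sum_i x_i$ and $y_i\define x_i-\bar x$, so that $\mathbf{y}\perp (\mathbf{1}_N\otimes\mathbb{R}^2)$. Since $\mathbf{H}\mathbf{1}_N=0$, the critical-point equation \eqref{equi2} becomes
\[
(\mathbf{H}\otimes \mathbf{I}_2)\,\mathbf{y} \;=\; -(\mathbf{D}\otimes \mathbf{I}_2)\,\mathbf{p},
\]
with $\mathbf{p}[i]=p_i$ for $i\in\mathcal N_a$ and $\mathbf{p}[i]=0$ otherwise. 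By Step~1, the right-hand side has norm at most $\sqrt{N_a}\cdot \tfrac{3}{2}\varepsilon^{3/2}$.

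\medskip

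\emph{Step 3: Algebraic-connectivity lower bound.} Since every weight satisfies $h_{ij}\ge 1/r^2$, the Laplacian satisfies $\mathbf{H}\succeq \frac{1}{r^2}\mathbf{L}_0$ where $\mathbf{L}_0$ is the unweighted Laplacian of the current $G(t)$. By Theorem~\ref{connectivity}, $G(t)$ is connected, hence $\lambda_2(\mathbf{L}_0)\ge \lambda^\star>0$ for some constant $\lambda^\star$ depending only on $N$ (e.g.\ take the minimum $\lambda_2$ over all connected graphs on $N$ vertices, which is attained since there are finitely many). Since $\mathbf{y}$ lies in the orthogonal complement of the null space of $\mathbf{H}\otimes\mathbf{I}_2$, we get
\[
\frac{\lambda^\star}{r^2}\,\|\mathbf{y}\|\;\le\;\|(\mathbf{H}\otimes\mathbf{I}_2)\,\mathbf{y}\|\;\le\; \tfrac{3}{2}\sqrt{N_a}\,\varepsilon^{3/2},
\]
so $\|\mathbf{y}\|\le K\varepsilon^{3/2}$ for a constant $K$ depending only on $N$ and $r$. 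In particular, for every pair $i,j\in\mathcal N$,
\[
\|x_{ij}\|=\|y_i-y_j\|\;\le\; 2\|\mathbf{y}\|\;\le\;2K\varepsilon^{3/2}.
\]
This bound holds for \emph{all} pairs, a fortiori for all $(i,j)\in E(t)$, which proves~(I).

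\medskip

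\emph{Step 4: Completeness.} The previous estimate bounds every pairwise distance, not just those across existing edges. Choose $\varepsilon_0>0$ small enough that $2K\varepsilon_0^{3/2}\le r-\delta$. Then for every $\varepsilon<\varepsilon_0$ and every $i,j\in\mathcal N$, $\|x_i-x_j\|\le r-\delta$, so by clause~(1) of Def.~\ref{edge}, $(i,j)\in E(t)$. Hence $G(t)$ is complete, proving~(II).
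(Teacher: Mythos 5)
Your proof is correct, but it takes a genuinely different route from the paper's. The paper works with the quadratic form: it multiplies the critical-point equation by $\mathbf{x}_c^T$ to get $\sum_{(i,j)\in E(t)} h_{ij}\|x_{ij}\|^2 = -\mathbf{x}_c^T(\mathbf{D}\otimes\mathbf{I}_2)(\mathbf{x}_c-\mathbf{c})$, bounds the right-hand side by $N c_{\max}\varepsilon\sqrt{\varepsilon}$ using the same key estimate $d_i\|p_i\| = O(\varepsilon^{3/2})$ together with Assumption~\ref{region-assump}(II), and then extracts a per-edge bound $\|x_{ij}\|^2 \le \varepsilon\sqrt{\varepsilon}\,\xi$ from $h_{ij}\ge 1/r^2$; completeness then needs a separate triangle-inequality argument along a path of length at most $N-1$ connecting an arbitrary pair. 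You instead keep the equation linear, split off the consensus component, and invert $\mathbf{H}\otimes\mathbf{I}_2$ on the disagreement subspace using a uniform lower bound on $\lambda_2$ over connected graphs on $N$ nodes. This buys you three things: a bound on \emph{all} pairwise distances at once (so part (II) is immediate, no path argument), no reliance on the $c_{\max}$ bound from Assumption~\ref{region-assump}(II), and a sharper rate $\|x_{ij}\| = O(\varepsilon^{3/2})$ versus the paper's $O(\varepsilon^{3/4})$. The price is the worst-case constant: your $\lambda^\star$ is attained by the path graph and scales like $1/N^2$, whereas the paper's path argument costs only a factor $N-1$; both are harmless for the qualitative statement. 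One small presentational point: in Step 3 you could tighten $\|y_i - y_j\| \le \sqrt{2}\,\|\mathbf{y}\|$, but $2\|\mathbf{y}\|$ is of course sufficient. Your Step 1 case analysis is also a complete justification of the inequality $d_i\|p_i\| = O(\varepsilon^{3/2})$ that the paper asserts without proof, which is a welcome addition.
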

\begin{proof}
(I) Consider the following equation for $\mathbf{x}_c\in \mathcal{C}$
$$
\sum_{(i,j)\in {E}(t)} h_{ij} \|x_{ij}\|^2 
 = \mathbf{x}^T_c \cdot (\textbf{H}\otimes \textbf{I}_2) \cdot \mathbf{x}_c.
$$
Combining the above equation  with~\eqref{equi2}, we get  
\begin{equation}\label{relativebound}
\begin{split}
&\sum_{(i,j)\in {E}(t)} h_{ij} \|x_{ij}\|^2  =-\mathbf{x}_c^T \cdot (\textbf{D}\otimes \textbf{I}_2)\cdot (\mathbf{x}_c-\mathbf{c})\\
& =-(\mathbf{x}_c-\mathbf{c})^T \cdot (\textbf{D}\otimes \textbf{I}_2)\cdot (\mathbf{x}_c-\mathbf{c})\\
&\quad \; \; -\mathbf{c}^T \cdot (\textbf{D}\otimes \textbf{I}_2)\cdot (\mathbf{x}_c-\mathbf{c})\\
& =-\sum_{i\in \mathcal{N}}b_i\,d_i\, \big(\|p_i\|^2 + c_{i\ell}^T \, p_i\big) \leq \sum_{i\in \mathcal{N}} b_i\, \|c_{i\ell}\|\,d_i\|p_i\|.
\end{split}
\end{equation}
Since $d_i\,\|p_i\|<\varepsilon\sqrt{\varepsilon}$ for $\|p_i\|\geq 0$, we get
\begin{equation}\label{relativebound2}
\sum_{(i,j)\in {E}(t)} h_{ij} \|x_{ij}\|^2 <N_a\,c_{\max}\,\varepsilon \sqrt{\varepsilon}\leq N\,c_{\max}\,\varepsilon \sqrt{\varepsilon},
\end{equation}
where $\|c_{i\ell}\|<c_{\max}$ is given in Assump.~\ref{region-assump}.
Thus $\forall (i,\,j)\in E(t)$, it holds that $h_{ij}\|x_{ij}\|^2<N\,c_{\max}\,\varepsilon \sqrt{\varepsilon}\triangleq \varsigma$. It can be verified that $h_{ij}\|x_{ij}\|^2$ is monotonically increasing as a function of $\|x_{ij}\|$. 
This implies that $\forall (i,\,j)\in E(t)$,
$\|x_{ij}\|^2\leq r^2\,\varsigma$,  or equivalently $\|x_{ij}\|^2\leq \varepsilon \sqrt{\varepsilon}\,\xi$,
where
\begin{equation}\label{xi}
\xi \triangleq r^2\, N\, c_{\max}. 
\end{equation}
Thus $\|x_{ij}\|$ can be made arbitrarily small by reducing $\varepsilon$. 


(II) Moreover, let $\varepsilon_0$ satisfy 
\begin{equation}\label{v0}
(N-1)\sqrt{\varepsilon_0\sqrt{\varepsilon_0}\, \xi} <r-\delta.
\end{equation}
If $\varepsilon<\varepsilon_0$, then for {any} pair $(p,\,q)\in \mathcal{N}\times \mathcal{N}$, $\|x_{pq}\|$ satisfies 
$$
\|x_{pq}\|= |x_p-x_1+x_1-x_2+\ldots -x_q|\leq (N-1) \sqrt{\varepsilon\sqrt{\varepsilon}\,\xi} <r-\delta,
$$
where we use two facts: there exists a path in $G(t)$ of maximal length $N$ from any node $p\in \mathcal{N}$ to another node $q$ as $G(t)$ remains connected for $t>T_s$ by Lemma~\ref{connectivity}; and $\|x_{ij}\|\leq \varepsilon \sqrt{\varepsilon}\,\xi$ from above, $\forall (i,\,j)\in E(t)$. 
By Def.~\ref{edge} this implies $(p,\,q)\in E(t)$. Thus $G(t)$ is a complete graph.
\end{proof}

Before stating the convergence property, we need to define the following sets for all $i\in \mathcal{N}_a$:
\begin{equation}\label{si}
\mathcal{S}_i\triangleq \{\mathbf{x}\in \mathbb{R}^{2N}\,|\,\|\mathbf{x}-\mathbf{1}_N\otimes c_{i\ell}\|\leq r_S(\varepsilon)\},
\end{equation}
where $r_S(\varepsilon) \triangleq \sqrt{3N\,\varepsilon}+\sqrt{(N-1)\varepsilon\sqrt{\varepsilon}\, \xi}$ and $\xi$ is defined in~\eqref{xi}. Loosely speaking, $\mathcal{S}_i$ represents the neighbourhood around the goal region center of the active agent $i\in \mathcal{N}_a$. Furthermore, let $\mathcal{S}\triangleq\cup_{i\in \mathcal{N}_a} \mathcal{S}_i$ and $\mathcal{S}^{\neg}\triangleq\mathbb{R}^{2N} \setminus\mathcal{S}$. 

In the following, we analyze the properties of the critical points within $\mathcal{S}$ and $\mathcal{S}^{\neg}$. More specifically:
by Lemma~\ref{away} there are no local minimal but  saddle points within $\mathcal{S}^{\neg}$; by Lemma~\ref{isolated} these saddle points are non-degenerate, 
(II) by Lemmas~\ref{distance}-\ref{localmini} all critical points within $\mathcal{S}$ are  local minima.

To explore these properties, we compute the second partial derivatives of $V(t)$ with respect to $x_i$, which are given by 
\begin{equation}\label{2gradient}
\begin{split}
\frac{\partial^2 V}{\partial x_i\partial x_i }&= b_i\, d_i \otimes \textbf{I}_2 + b_i\, d_i'\, p_i\cdot p_i^T \\
&+\sum_{j\in \mathcal{N}_i(t)} \big(h_{ij}\otimes \textbf{I}_2 +  h_{ij}'\,x_{ij}\cdot x_{ij}^T\big)
\end{split}
\end{equation}
and 
\begin{equation}\label{2gradient2}
\begin{split}
\frac{\partial^2 V}{\partial x_i\partial x_j }&= - h_{ij}\otimes \textbf{I}_2 -  h_{ij}'\,x_{ij}\cdot x_{ij}^T, \qquad \forall j\neq i,
\end{split}
\end{equation}
where 
\begin{equation}\label{hp}
d_i' =\frac{{-4\,\varepsilon^3}}{(\|p_i\|^2+\varepsilon)^3}+\frac{{-\,\varepsilon^2}}{(\|p_i\|^2+\varepsilon)^2},
\text{ and }
h_{ij}'= \frac{4\,r^2}{(r^2-\| x_{ij}\|^2)^3}.
\end{equation}


\begin{lemma}\label{away}
There are no local minima of $V$ within $\mathcal{S}^\neg$.
\end{lemma}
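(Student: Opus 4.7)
The plan is to exhibit a direction along which the Hessian of $V$ is strictly negative at any critical point $\mathbf{x}_c \in \mathcal{C}\cap\mathcal{S}^\neg$, which suffices to rule out local minima. The natural candidate is the ``consensus'' direction $\mathbf{v} = \mathbf{1}_N \otimes v$ for some unit $v \in \mathbb{R}^2$: translating every agent by the same $v$ preserves all offsets $x_{ij}$, so in $\mathbf{v}^T \nabla^2 V(\mathbf{x}_c)\mathbf{v}$ the contributions of $h_{ij}\otimes \mathbf{I}_2$ and $h_{ij}' x_{ij}x_{ij}^T$ from~\eqref{2gradient}-\eqref{2gradient2} should cancel by the Laplacian-type structure, leaving only the goal-attraction part
$$\mathbf{v}^T \nabla^2 V(\mathbf{x}_c)\mathbf{v} \;=\; \sum_{i \in \mathcal{N}_a} \Bigl[d_i\|v\|^2 + d_i'\,(v^T p_i)^2\Bigr].$$

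The first step is to extract a clean lower bound $\|p_i\|^2 > 3\varepsilon$ for every active agent $i \in \mathcal{N}_a$ using the definition of $\mathcal{S}_i$. Since $\mathbf{x}_c \notin \mathcal{S}_i$ implies $\|\mathbf{x}_c - \mathbf{1}_N\otimes c_{ig}\| > r_S(\varepsilon) = \sqrt{3N\varepsilon} + \sqrt{(N-1)\varepsilon\sqrt{\varepsilon}\,\xi}$, and Lemma~\ref{difference} guarantees that the cluster is tight at any critical point (with $\|x_j - x_i\|^2 \leq \varepsilon\sqrt{\varepsilon}\,\xi$ for $\varepsilon<\varepsilon_0$), the triangle inequality
$\|\mathbf{x}_c - \mathbf{1}_N\otimes c_{ig}\| \leq \|\mathbf{x}_c - \mathbf{1}_N\otimes x_i\| + \sqrt{N}\,\|p_i\|$
together with $\|\mathbf{x}_c - \mathbf{1}_N\otimes x_i\|^2 = \sum_j \|x_j - x_i\|^2 \leq (N-1)\varepsilon\sqrt{\varepsilon}\,\xi$ should yield exactly $\sqrt{N}\,\|p_i\| > \sqrt{3N\varepsilon}$. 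Note that the second summand in $r_S(\varepsilon)$ is tailored precisely to absorb the cluster-size slack from Lemma~\ref{difference}, leaving the clean bound $\|p_i\|^2 > 3\varepsilon$.

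Rather than committing to a specific $v$, I would then average over $v$ uniformly on the unit circle in $\mathbb{R}^2$, so that $\mathbb{E}[(v^T p_i)^2] = \tfrac{1}{2}\|p_i\|^2$. After a direct algebraic simplification with $s_i = \|p_i\|^2$, the averaged Hessian collapses to
$$\mathbb{E}\bigl[\mathbf{v}^T \nabla^2 V(\mathbf{x}_c)\mathbf{v}\bigr] \;=\; \sum_{i \in \mathcal{N}_a}\Bigl(d_i + \tfrac{1}{2}d_i'\,s_i\Bigr) \;=\; \sum_{i \in \mathcal{N}_a} \frac{(3\varepsilon - s_i)\,\varepsilon^3}{2(s_i+\varepsilon)^3}.$$
By the previous step each summand is strictly negative, so the expectation is negative and consequently at least one unit vector $v^\star$ achieves $\mathbf{v}^{\star T}\nabla^2 V(\mathbf{x}_c)\mathbf{v}^\star < 0$. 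This produces a descent direction of $V$ from $\mathbf{x}_c$, so $\mathbf{x}_c$ cannot be a local minimum.

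The main obstacle will be the last algebraic identity: one needs to verify carefully that $d_i + \tfrac12 d_i'\,s_i$ simplifies to $(3\varepsilon - s_i)\varepsilon^3/\bigl(2(s_i+\varepsilon)^3\bigr)$ using the explicit formulas for $d_i, d_i'$ in~\eqref{d} and~\eqref{hp}, so that the sign flip at the threshold $s_i = 3\varepsilon$ matches precisely the lower bound coming from $\mathcal{S}^\neg$. The cancellation of the $h_{ij}$-terms in the consensus direction is routine but must be spelled out for both $h_{ij}\otimes\mathbf{I}_2$ and the rank-one $h_{ij}'\,x_{ij}x_{ij}^T$ blocks.
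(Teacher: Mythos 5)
Your proof is correct and takes essentially the same route as the paper: the same derivation of $\|p_i\|^2>3\varepsilon$ from the definition of $\mathcal{S}^\neg$ together with the cluster bound of Lemma~\ref{difference}, and the same restriction of the Hessian to the consensus direction $\mathbf{1}_N\otimes v$ (where the $h_{ij}$ and $h_{ij}'$ blocks indeed cancel), with your average over the unit circle being exactly half the trace of the paper's matrix $M=\sum_{i\in\mathcal{N}_a}(d_i\otimes \textbf{I}_2+d_i'p_ip_i^T)$ and your identity $d_i+\tfrac{1}{2}d_i's_i=\varepsilon^3(3\varepsilon-s_i)/\bigl(2(s_i+\varepsilon)^3\bigr)$ checking out against~\eqref{d} and~\eqref{hp}. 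The only substantive difference is that the paper also verifies $\textbf{det}(M)>0$ to conclude that $M$ is negative definite, i.e., that \emph{every} consensus direction is a descent direction; your trace-only argument yields just one such direction, which suffices for this lemma, but the stronger negative-definiteness of $\nabla^2V|_{\mathcal{Q}}$ is reused in Lemma~\ref{isolated}, so it cannot be dropped from the overall development.
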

\begin{proof}
We prove this by showing that if a critical point $\mathbf{x}_c\in \mathcal{S}^{\neg}$ there always exists a direction $\mathbf{z}\in \mathbb{R}^{2N}$ at $\mathbf{x}_c$ such that the {quadratic form} $\mathbf{z}^T \nabla^2V \mathbf{z}$ is negative semi-definite. 

Given a critical point $\mathbf{x}_c\in \mathcal{C}$ and $\mathbf{x}_c\in \mathcal{S}^\neg$, then by definition $\|\mathbf{x}-\mathbf{1}_N\otimes c_{i\ell}\|>r_S,\, \forall i \in \mathcal{N}_a$. On the other hand, for any $i\in \mathcal{N}_a$, we can bound $\|\mathbf{x}-\mathbf{1}_N\otimes c_{i\ell}\|$ as follows:
\begin{equation*}
\begin{split}
&\|\mathbf{x}-\mathbf{1}_N\otimes c_{i\ell}\|=\|\mathbf{x}-\mathbf{1}_N\otimes x_{i}+\mathbf{1}_N\otimes (x_{i}- c_{i\text{g}})\|\\
&\leq \sqrt{\sum_{j\in \mathcal{N}} \|x_{ij}\|^2} + \sqrt{N}\, \|p_i\|\leq \sqrt{(N-1)\varepsilon\sqrt{\varepsilon}\, \xi} + \sqrt{N} \,\|p_i\|,
\end{split}
\end{equation*}
where $\|x_{ij}\|^2\leq \varepsilon\, \sqrt{\varepsilon}\,\xi$ at $\mathbf{x}_c$, $\forall (i,\,j)\in E(t)$ by Lemma~\ref{difference}.
By comparing it with $r_S(\varepsilon)$, we get
$\|p_i\|\geq \sqrt{3\,\varepsilon}$, $\forall i \in \mathcal{N}_a$.

Choose $\mathbf{z} \triangleq \mathbf{1}_N\otimes {z}$, where $z\in \mathbb{R}^2$ and $\|z\|\triangleq 1$. Then $\mathbf{z}^T \, \nabla^2V\, \mathbf{z}$ is evaluated by using~\eqref{2gradient}-\eqref{hp}:
\begin{equation*}
\begin{split}
\mathbf{z}^T \nabla^2V \mathbf{z}&=\sum_{i\in \mathcal{N}} b_i\, d_{i}\,z^T z + b_i\, d_{i}' \, z^T p_i\, p_{i}^Tz\triangleq z^T M z,
\end{split}
\end{equation*}
where $M\triangleq \sum_{i\in \mathcal{N}_a}(d_i\otimes \textbf{I}_2 + d_i'\, p_i\, p_{i}^T)$ is a $2\times 2$ Hermitian matrix.
The trace of $M$ is computed as 
\begin{equation}\label{trace}
\begin{split}
\textbf{trace}(M)&= \sum_{i\in \mathcal{N}_a} 2\,d_i + d_i'\, \|p_i\|^2\\
&=\varepsilon^3\sum_{i\in \mathcal{N}_a}\frac{{3\varepsilon-\|p_i\|^2}}{(\|p_i\|^2+\varepsilon)^3}<0,
\end{split}
\end{equation}
as we have shown that $\|p_i\|\geq \sqrt{3\varepsilon}, \forall i \in \mathcal{N}_a$ if $\mathbf{x}_c \in \mathcal{S}^{\neg}$. On the other hand, denote by $p_i=[p_{i,x},\,p_{i,y}]$ the coordinates of~$p_i$. The determinant of $M$ is given by 
\begin{equation}\label{determinant}
\begin{split}
&\textbf{det}(M)=
-(\sum_{i\in \mathcal{N}_a} d_i' \, p_{i,x}\,p_{i,y})^2\\
&\qquad +(\sum_{i\in \mathcal{N}_a} d_i + d_i'\, p_{i,x}^2)(\sum_{i\in \mathcal{N}_a} d_i + d_i'\, p_{i,y}^2)\\
&\geq \frac{1}{2}\sum_{i,\,j\in \mathcal{N}_a} \big[(d_i+d_i'\|p_{i}\|^2)(d_j+d_j'\|p_{j}\|^2)\big]>0,
\end{split}
\end{equation}
since $d_i'\|p_i\|^2<-d_i$ for $\|p_i\|>\sqrt{3\varepsilon}$, $\forall i\in \mathcal{N}_a$; and $(p_{i,x}p_{i,y}-p_{j,x}p_{j,y})^2\leq \|p_i\|^2\|p_j\|^2$ by Cauchy-Schwarz inequality~\cite{horn2012matrix}.

Denote by $\lambda_1$ and $\lambda_2$ the eigenvalues of $M$, where $\lambda_{1},\lambda_2\in \mathbb{R}$ as $M$ is Hermitian. 
Since $\textbf{trace}(M)<0$ and $\textbf{det}(M)>0$, then $M$ is negative definite and both eigenvalues are negative~\cite{horn2012matrix}, i.e.,  $\lambda_1, \lambda_2<0$.  Thus for any vector $v= \textbf{1}_N\otimes z$ where $z\in \mathbb{R}^2$, $v^T\nabla^2 V v<0$. 
To conclude, for any critical point $\mathbf{x}_c\in \mathcal{C}$, if $\mathbf{x}_c\in \mathcal{S}^\neg$ then $\mathbf{x}_c$ is not a local minimum.
\end{proof}

\begin{lemma}\label{isolated}
There exists $\varepsilon_1>0$ such that if $\varepsilon<\varepsilon_1$,  all critical points of $V$ in $\mathcal{S}^{\neg}$ are non-degenerate saddle points. 
\end{lemma}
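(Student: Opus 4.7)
The plan is to show that at any critical point $\mathbf{x}_c \in \mathcal{S}^{\neg}$ the full Hessian $\nabla^2 V$ has no zero eigenvalues once $\varepsilon$ is small; combined with Lemma~\ref{away} this will force $\mathbf{x}_c$ to be a non-degenerate saddle. First I would re-examine the block structure in~\eqref{2gradient}--\eqref{2gradient2} and write $\nabla^2 V = \textbf{L}\otimes \textbf{I}_2 + R(\varepsilon)$, where $\textbf{L}$ is the weighted graph Laplacian with weights $h_{ij}$ from~\eqref{d}, and $R(\varepsilon)$ collects all remaining contributions (the block-diagonal $b_i d_i \otimes \textbf{I}_2$ and $b_i d_i' p_i p_i^T$ terms, and the rank-one $h_{ij}' x_{ij} x_{ij}^T$ corrections). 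By Lemma~\ref{difference}(II), for $\varepsilon$ below the threshold there, $G(t)$ is complete at any critical point, so $\textbf{L}$ is the (weighted) Laplacian of $K_N$ and its kernel is exactly $\textup{span}(\mathbf{1}_N)$.

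Next I would split $\mathbb{R}^{2N}$ as the orthogonal direct sum of the translation subspace $\mathcal{T}=\{\mathbf{1}_N\otimes z: z\in \mathbb{R}^2\}$ and its orthogonal complement $\mathcal{T}^\perp$. On $\mathcal{T}$, Lemma~\ref{away} already provides the quadratic form $z^T M z$, whose trace and determinant bounds in~\eqref{trace}--\eqref{determinant} give two strictly negative eigenvalues bounded away from $0$ (uniformly in $\mathbf{x}_c \in \mathcal{S}^{\neg}$ once $\|p_i\|\geq \sqrt{3\varepsilon}$). On $\mathcal{T}^\perp$, the key observation is that for any $\mathbf{w}\in \mathcal{T}^\perp$,
\begin{equation*}
\mathbf{w}^T (\textbf{L}\otimes \textbf{I}_2)\mathbf{w} \;\geq\; \lambda_2(\textbf{L})\,\|\mathbf{w}\|^2,
\end{equation*}
where $\lambda_2(\textbf{L})$ is the algebraic connectivity of $K_N$ with the weights $h_{ij}$. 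Since at a critical point $\|x_{ij}\|^2\leq \varepsilon\sqrt{\varepsilon}\,\xi$ by Lemma~\ref{difference}(I), each $h_{ij}$ is bounded from below (near $1/r^2$) and above for small $\varepsilon$; hence $\lambda_2(\textbf{L})\geq \lambda_\star>0$ uniformly for $\varepsilon$ below some threshold.

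The remaining step is to bound $\|R(\varepsilon)\|$. Using the explicit formulas in~\eqref{d} and~\eqref{hp} together with $\|p_i\|^2$ being bounded (by Assumption~\ref{region-assump}(II) and the constraint $V(t)\leq V(T_s)$) and $\|x_{ij}\|^2\leq \varepsilon\sqrt{\varepsilon}\,\xi$, one checks that every term in $R(\varepsilon)$ vanishes as $\varepsilon\to 0$: $d_i$ and $d_i'\|p_i\|^2$ are $O(\varepsilon)$, while each $h_{ij}'\|x_{ij}\|^2$ is $O(\varepsilon\sqrt{\varepsilon})$. Thus there exists $\varepsilon_1>0$ such that $\|R(\varepsilon)\|<\lambda_\star/2$ whenever $\varepsilon<\varepsilon_1$, so $\mathbf{w}^T\nabla^2 V\,\mathbf{w}\geq (\lambda_\star/2)\|\mathbf{w}\|^2$ for all $\mathbf{w}\in \mathcal{T}^\perp$. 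Finally, the same smallness of $R(\varepsilon)$ controls the off-diagonal coupling between $\mathcal{T}$ and $\mathcal{T}^\perp$: a standard Schur-complement argument shows that the interlaced eigenvalues of $\nabla^2 V$ remain bounded away from $0$ for $\varepsilon<\varepsilon_1$, yielding two strictly negative and $2N-2$ strictly positive eigenvalues. Therefore every critical point in $\mathcal{S}^{\neg}$ is a non-degenerate saddle.

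The main obstacle I expect is making the lower bound $\lambda_\star$ on $\lambda_2(\textbf{L})$ uniform: one must ensure that the $h_{ij}$ weights stay bounded both above and below at \emph{every} critical point in $\mathcal{S}^\neg$, not just at a single one. This reduces to the bound $0\leq \|x_{ij}\|^2\leq \varepsilon\sqrt{\varepsilon}\,\xi$ from Lemma~\ref{difference}, which keeps $h_{ij}$ close to $1/r^2$ uniformly for small $\varepsilon$; the algebraic connectivity of $K_N$ with nearly uniform weights is then easily bounded below.
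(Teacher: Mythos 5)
Your core estimates coincide with the paper's: you use the same splitting of $\mathbb{R}^{2N}$ into the translation subspace $\mathcal{Q}=\{\mathbf{1}_N\otimes z\}$ (where Lemma~\ref{away} gives the negative-definite form $z^TMz$) and its orthogonal complement (where the spectral gap $\lambda_2(\mathbf{H})>N/r^2$ of the weighted Laplacian of the complete graph, guaranteed by Lemma~\ref{difference}, dominates the $O(\varepsilon)$ contribution of the $d_i$ and $d_i'$ terms — this is exactly the paper's bound leading to~\eqref{quadratic5}). Where you diverge is the final step, and that is where there is a genuine gap. Your Schur-complement argument needs the two negative eigenvalues of $M$ to be \emph{uniformly} bounded away from zero over all critical points in $\mathcal{S}^{\neg}$, so that they survive an $O(\varepsilon^2)$ correction $BC^{-1}B^{T}$ from the coupling block. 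Lemma~\ref{away} does not give you that, and it is not true: from~\eqref{trace}, the trace of $M$ equals $\varepsilon^3\sum_{i\in\mathcal{N}_a}(3\varepsilon-\|p_i\|^2)/(\|p_i\|^2+\varepsilon)^3$, which tends to $0$ both as $\|p_i\|^2\downarrow 3\varepsilon$ and as the $\|p_i\|$ grow (it is of order $\varepsilon^3/\|p_i\|^4$ for large $\|p_i\|$), so at least one eigenvalue of $M$ can be arbitrarily small in magnitude compared with the $O(\varepsilon)$ off-diagonal coupling produced by the $b_i d_i\otimes\mathbf{I}_2+b_id_i'p_ip_i^T$ blocks. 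The inequality $\|BC^{-1}B^T\|<|\lambda_{\max}(A)|$ that your argument implicitly requires therefore cannot be established from the available bounds.

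The fix is that no quantitative comparison between the blocks is needed at all. By the Courant--Fischer min--max principle, a symmetric matrix whose quadratic form is negative definite on a $2$-dimensional subspace satisfies $\lambda_2(\nabla^2V)<0$, and one whose form is positive definite on a complementary $(2N-2)$-dimensional subspace satisfies $\lambda_3(\nabla^2V)>0$; hence the Hessian has exactly two negative and $2N-2$ positive eigenvalues and is non-singular, irrespective of the off-diagonal coupling. This inertia argument is precisely the content of Lemma~3.8 of~\cite{rimon1988exact}, which the paper invokes at this point. Replacing your Schur-complement paragraph (and the unsupported uniformity claim) with this observation makes your proof coincide with the paper's.
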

\begin{proof}
To show that $V$ is Morse we use Lemma 3.8 from~\cite{rimon1988exact}, which states that the non-singularity of a linear operator follows from the fact that its associated quadratic form is sign definite on complementary subspaces.

Let $\mathcal{Q}=\{v\in \mathbb{R}^{2N}\,|\, v=\mathbf{1}_N\otimes z, \, z \in\mathbb{R}^2\}$.
In Lemma~\ref{away}, we have shown that for any vector $v\in \mathcal{Q}$, $v^T\nabla^2 V v<0$. 
Let $\mathcal{P}=\{v\in \mathbb{R}^{2N}\,|\, v=\mathbf{e}_N\otimes z,\; \mathbf{e}_N\perp \mathbf{1}_N,\, \mathbf{e}_N\in \mathbb{R}^N,\, z \in \mathbb{R}^2\}$.  
Firstly, it can be easily verified that 
$\mathcal{P}$ is the orthogonal complement of $\mathcal{Q}$. 
In the following, we show that $\nabla^2 V$ is positive definite in $\mathcal{P}$. 
Let $\mathbf{z} \in \mathcal{P}$, i.e., $\mathbf{z}\triangleq \mathbf{e}_N \otimes z \triangleq [z_1^T\;z_2^T\ldots z_n^T]^T$, where $z\in \mathbb{R}^2$, $\mathbf{e}_N\in \mathbb{R}^N$, $\mathbf{e}_N^T \perp \mathbf{1}_N$, $z_i\in \mathbb{R}^2$, $\forall i\in \mathcal{N}$. 
The quadratic form $\mathbf{z}^T \, \nabla^2V\, \mathbf{z}$ at $\mathbf{x}_c$ is computed using~\eqref{2gradient}-\eqref{hp}:
\begin{equation*}
\begin{split}
&\mathbf{z}^T \nabla^2V \mathbf{z}=\sum_{i\in \mathcal{N}_a} \big(d_{i}\,\|z_i\|^2 + d_{i}' \, |p_{i}^Tz_i|^2\big) \\
&+\sum_{(i,\,j)\in E(t)}\big( h_{ij} \,\|z_i-z_j\|^2 +2\,h'_{ij}\,|(x_i-x_j)^T(z_i-z_j)|^2\big)\\
&\geq \sum_{i\in \mathcal{N}_a} \big(d_{i}\,\|z_i\|^2 + d_{i}' \, |p_{i}^Tz_i|^2)+\sum_{(i,\,j)\in E(t)} h_{ij} \,\|z_i-z_j\|^2  \\
&\geq \sum_{i\in \mathcal{N}_a} \big(d_{i} + d_i'\|p_i\|^2\big)\|z_i\|^2+ 
 \mathbf{z}^T (\mathbf{H}\otimes \textbf{I}_2) \mathbf{z},
\end{split}
\end{equation*}
where we use the fact that $h_{ij}'>0$, $d_i'<0$ and $|p_{i}^Tz_i|\leq \|p_i\|\|z_i\|$. It can be verified that $d_{i} + d_i'\|p_i\|^2>-0.1\varepsilon$ for $\|p_i\|\geq \sqrt{3\varepsilon}$, $\forall i\in \mathcal{N}_a$. Moreover, 
\begin{equation}\label{laplacian}
\begin{split}
\mathbf{z}^T (\mathbf{H}\otimes \textbf{I}_2) \mathbf{z}&=(\mathbf{e}_N \otimes z)^T \cdot (\mathbf{H}\otimes \textbf{I}_2)\cdot  (\mathbf{e}_N \otimes z)\\
&= (\mathbf{e}^T_N\cdot \mathbf{H}\cdot \mathbf{e}_N) \|z\|^2\geq \lambda_2(\mathbf{H})\|z\|^2, 
\end{split}
\end{equation}
where we apply the Courant-Fischer Theorem~\cite{horn2012matrix}: 
$$\min_{\mathbf{e}_N\perp \textbf{1}_N}\{\mathbf{e}^T_N\cdot \mathbf{H}\cdot \mathbf{e}_N\}=\lambda_2(\textbf{H})>0,$$
since $\textbf{H}$ is the Laplacian matrix defined in~\eqref{equi2}, which is positive semidefinite with $\lambda_1(\mathbf{H})=0$, of which the corresponding eigenvector is $\mathbf{1}_N$; and the second smallest eigenvalue $\lambda_2(\mathbf{H})>0$.
In addition, since $h_{ij}>1/{r^2}$ and $G(t)$ is a complete graph at $\mathbf{x}_c$ by Lemma~\ref{difference}, it holds that $\lambda_2(\mathbf{H})>N/{r^2}$ by~\cite{Godsil}. 
This implies that 
\begin{equation}\label{quadratic5}
\begin{split}
\mathbf{z}^T \nabla^2V \mathbf{z}
&\geq \sum_{i\in \mathcal{N}_a} \big(\frac{N}{r^2}+d_{i} + d_i'\|p_i\|^2\big)\|z_i\|^2\\
&\geq \sum_{i\in \mathcal{N}_a} \big(\frac{N}{r^2}-0.1\varepsilon\big)\|z_i\|^2.
\end{split}
\end{equation}
Thus if 
$\varepsilon < {N}/({0.1 r^2})$, 
it holds that $\mathbf{z}^T \nabla^2V \mathbf{z}>0$, $\forall \mathbf{z}=\mathbf{e}_N \otimes z$ where $\mathbf{e}_N\perp \mathbf{1}_N$, $z\in \mathbb{R}^2$.

To conclude, $\nabla^2V|_{\mathcal{Q}}$ is negative definite by Lemma~\ref{away} and $\nabla^2V|_{\mathcal{P}}$ is positive definite by the analysis above. 
By applying Lemma 3.8 from~\cite{rimon1988exact}, we can conclude that $\nabla^2V$ is non-singular at the saddle points $\mathbf{x}_c\in \mathcal{S}^{\neg}$, if 
\begin{equation}\label{v1}
\varepsilon<\min\{\varepsilon_0,\, \frac{N}{0.1 r^2}\}\triangleq \varepsilon_1.
\end{equation}
In other words, all critical points within $\mathcal{S}^{\neg}$ are non-degenerate saddle points if $\varepsilon<\varepsilon_1$.
\end{proof}

Now we focus on proving that all critical points within $\mathcal{S}$ are stable local minima. First of all, we need the following two lemmas to show that when a critical point belongs to $\mathcal{S}_i$ corresponding to one active agent $i\in \mathcal{N}_a$, then all the other agents are within its goal region $\pi_{i\text{g}}$ and away from their own goal region center by at least distance $r_{\min}$.
\begin{lemma}\label{distance}
There exists $\varepsilon_2>0$, such that if $\varepsilon<\varepsilon_2$, the following  hold: 
(I) 
$\mathcal{S}_i\cap \mathcal{S}_j=\emptyset$, $\forall i\neq j$ and $i,\,j \in \mathcal{N}_a$; 
(II)
If $\mathbf{x}_c\in \mathcal{S}_{i^\star}$ for some $i^\star \in \mathcal{N}_a$, then $x_j\in \pi_{i^\star\textup{g}}$, $\forall j\in \mathcal{N}$ and $\|x_j-c_{j\textup{g}}\|>r_{\min}$, $j\neq i^\star$, $\forall j\in \mathcal{N}_a$.
\end{lemma}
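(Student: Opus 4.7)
The plan is to exploit the fact that $r_S(\varepsilon) \to 0$ as $\varepsilon \to 0$ (since both $\sqrt{3N\varepsilon}$ and $\sqrt{(N-1)\varepsilon\sqrt{\varepsilon}\,\xi}$ vanish), together with the separation guaranteed by Assumption~\ref{region-assump}(I), which says $\|c_{i\mathrm{g}} - c_{j\mathrm{g}}\| > 2 r_{\min}$ for any goal centers of distinct agents. The key structural observation is that the ``center'' of each $\mathcal{S}_i$ is $\mathbf{1}_N \otimes c_{i\mathrm{g}}$ and the distance between two such centers in $\mathbb{R}^{2N}$ is $\sqrt{N}\,\|c_{i\mathrm{g}} - c_{j\mathrm{g}}\|$, so separation in the workspace amplifies into separation in the joint state space.

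For part (I), I will compute $\|\mathbf{1}_N \otimes c_{i\mathrm{g}} - \mathbf{1}_N \otimes c_{j\mathrm{g}}\| = \sqrt{N}\,\|c_{i\mathrm{g}} - c_{j\mathrm{g}}\| > 2\sqrt{N}\,r_{\min}$, and then note that $\mathcal{S}_i$ and $\mathcal{S}_j$ are closed balls of radius $r_S(\varepsilon)$ in $\mathbb{R}^{2N}$, so they are disjoint whenever $2 r_S(\varepsilon) < 2\sqrt{N}\,r_{\min}$. Since $r_S(\varepsilon)$ can be made arbitrarily small, a threshold $\varepsilon_2'$ with $r_S(\varepsilon_2') < \sqrt{N}\,r_{\min}$ exists.

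For part (II), suppose $\mathbf{x}_c \in \mathcal{S}_{i^\star}$. Since $\|\mathbf{x}_c - \mathbf{1}_N \otimes c_{i^\star\mathrm{g}}\|^2 = \sum_{j\in\mathcal{N}} \|x_j - c_{i^\star\mathrm{g}}\|^2 \leq r_S(\varepsilon)^2$, each individual term obeys $\|x_j - c_{i^\star\mathrm{g}}\| \leq r_S(\varepsilon)$ for every $j \in \mathcal{N}$. Choosing $\varepsilon$ small enough that $r_S(\varepsilon) \leq r_{\min} \leq r_{i^\star\mathrm{g}}$ immediately yields $x_j \in \mathcal{B}(c_{i^\star\mathrm{g}}, r_{i^\star\mathrm{g}}) = \pi_{i^\star\mathrm{g}}$. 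For the second assertion, for any $j \in \mathcal{N}_a$ with $j \neq i^\star$, the reverse triangle inequality combined with Assumption~\ref{region-assump}(I) gives
\[
\|x_j - c_{j\mathrm{g}}\| \geq \|c_{j\mathrm{g}} - c_{i^\star\mathrm{g}}\| - \|x_j - c_{i^\star\mathrm{g}}\| > 2 r_{\min} - r_S(\varepsilon),
\]
so requiring $r_S(\varepsilon) < r_{\min}$ suffices to conclude $\|x_j - c_{j\mathrm{g}}\| > r_{\min}$.

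Finally I will set $\varepsilon_2$ so that $r_S(\varepsilon_2) < r_{\min}$; since this already implies $r_S(\varepsilon_2) < \sqrt{N}\,r_{\min}$ (as $N \geq 1$), a single threshold handles both (I) and (II). The whole argument is essentially triangle-inequality bookkeeping; there is no real obstacle beyond writing out the $\varepsilon_2$ that makes $r_S(\varepsilon)$ small enough, e.g., by separately upper-bounding the two summands in the definition of $r_S$ and taking the smaller of two resulting thresholds.
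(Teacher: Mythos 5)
Your proposal is correct and follows essentially the same route as the paper: both reduce everything to triangle-inequality estimates using $\sqrt{N}\,\|c_{i\text{g}}-c_{j\text{g}}\|>2\sqrt{N}\,r_{\min}$ from Assumption~\ref{region-assump}(I) and the fact that the single condition $r_S(\varepsilon)<r_{\min}$ handles both parts (the paper just makes this explicit by defining $\varepsilon_2$ as the unique solution of $r_S(\varepsilon_2)=r_{\min}$, using the monotonicity of $r_S$). The only cosmetic difference is that you prove (I) via the separation of the ball centers in $\mathbb{R}^{2N}$ while the paper shows directly that a point of $\mathcal{S}_{i^\star}$ cannot lie in $\mathcal{S}_j$; these are equivalent.
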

\begin{proof}
Let $\varepsilon_2$ be given as the solution of 
\begin{equation}\label{v2}
r_S(\varepsilon_2)=\sqrt{3N\,\varepsilon_2}+\sqrt{(N-1)\varepsilon_2\sqrt{\varepsilon_2}\, \xi}\triangleq r_{\min},
\end{equation}
where $r_{\min}$ is given in Assump.~\ref{region-assump}. Note that~\eqref{v2} has an unique solution as the left-hand side is a function of $\varepsilon_2$ that monotonically increases and has the range $[0,\;\infty)$.

Assume that $\mathbf{x}_c \in \mathcal{S}_{i^\star}$ for $i^\star\in \mathcal{N}_a$, i.e., $\|\mathbf{x}_c-\mathbf{1}_N\otimes c_{i^\star\text{g}}\|\leq r_S(\varepsilon_2)$. 
Then $\forall j\neq i^\star$, $j\in \mathcal{N}_a$, it holds that
(I) 
$\|\mathbf{x}_c-\mathbf{1}_N\otimes c_{j\text{g}}\|=\|\mathbf{x}_c-\mathbf{1}_N\otimes c_{i\text{g}}+\mathbf{1}_N\otimes c_{i \text{g}}-\mathbf{1}_N\otimes c_{j\text{g}}\|
\geq \sqrt{N}\, \|c_{i\text{g}}-c_{j\text{g}}\|-\|\mathbf{x}_c-\mathbf{1}_N\otimes c_{i\text{g}}\|
\geq 2\sqrt{N} \,{r}_{\min} -r_S(\varepsilon),$
due to that $\|c_{i\text{g}}-c_{j\text{g}}\|>2r_{\min}$ by Assump.~\ref{region-assump}.
Since $\varepsilon<\varepsilon_2$, then $r_S(\varepsilon)<r_S(\varepsilon_2)=r_{\min}$. 
Thus $\|\mathbf{x}_c-\mathbf{1}_N\otimes c_{j\text{g}}\|>2\sqrt{N} \,{r}_{\min}-r_{\min}>r_{\min}=r_S(\varepsilon_2)$, implying that $\mathbf{x}_c \notin \mathcal{S}_j$.
(II) 
$\|x_j-c_{i^\star\text{g}}\|<\|\mathbf{x}_c-\mathbf{1}_N\otimes c_{i^\star\text{g}}\|<r_{\min}<r_{i^\star\text{g}}$, meaning that $x_j\in \pi_{i^\star\text{g}}$, $\forall j\in \mathcal{N}$. 
Moreover, $
\|x_j- c_{j\text{g}}\|=\|x_j- c_{i^\star\text{g}}+c_{i^\star\text{g}}-c_{j\text{g}}\|\geq \|c_{i^\star\text{g}}-c_{j\text{g}}\|-\|x_j- c_{i^\star\text{g}}\|\geq 2r_{\min}-r_{\min}>r_{\min}.
$
\end{proof}
\begin{lemma}\label{furtherbound}
There exists $\varepsilon_6>0$ such that if $\varepsilon<\varepsilon_6$, then for a critical point $\mathbf{x}_c \in \mathcal{S}_{i^\star}$, $i^\star\in \mathcal{N}_a$, then it holds that $\|p_{i^\star}\|<\sqrt{0.4\varepsilon}$.
\end{lemma}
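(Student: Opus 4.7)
\medskip

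The plan is to start from the critical-point equation~\eqref{equi} and reduce it to a single scalar inequality on $\|p_{i^\star}\|$ by exploiting an antisymmetry cancellation, and then close the argument with the \emph{a priori} bound $\|p_{i^\star}\|\le r_S(\varepsilon)$ given by $\mathbf{x}_c\in\mathcal{S}_{i^\star}$.

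First, sum the critical-point equation $b_i\,d_i\,p_i+\sum_{j\in\mathcal{N}_i(t)}h_{ij}\,x_{ij}=0$ over all $i\in\mathcal{N}$. Because $h_{ij}$ depends only on $\|x_{ij}\|=\|x_{ji}\|$ while $x_{ij}=-x_{ji}$, and $(i,j)\in E(t)\Leftrightarrow(j,i)\in E(t)$, the double sum $\sum_i\sum_{j\in\mathcal{N}_i(t)} h_{ij}\,x_{ij}$ collapses pair-by-pair to zero. Since $b_i=1$ exactly for $i\in\mathcal{N}_a$, this yields
\begin{equation*}
\sum_{i\in\mathcal{N}_a} d_i\,p_i=0,\qquad\text{hence}\qquad d_{i^\star}\|p_{i^\star}\|\;\le\;\sum_{i\in\mathcal{N}_a\setminus\{i^\star\}} d_i\,\|p_i\|.
\end{equation*}

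Second, use Lemma~\ref{distance}(II) to bound the right-hand side. For each active $i\neq i^\star$ one has $\|p_i\|>r_{\min}$, so from~\eqref{d} and the elementary bounds $(\|p_i\|^2+\varepsilon)^2\ge\|p_i\|^4\ge r_{\min}^4$ and $\|p_i\|^2+\varepsilon\ge r_{\min}^2$,
\begin{equation*}
d_i\,\|p_i\|\;\le\;\frac{\varepsilon^3}{r_{\min}^3}+\frac{\varepsilon^2}{2\,r_{\min}},
\end{equation*}
which for $\varepsilon$ sufficiently small is at most $K\varepsilon^2$ for a constant $K=K(N,r_{\min})$ absorbing also the factor $N-1$. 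Consequently $d_{i^\star}\|p_{i^\star}\|\le K\varepsilon^2$.

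Third, produce a matching lower bound. Keeping only the second summand in the definition of $d_{i^\star}$ in~\eqref{d} gives $d_{i^\star}\|p_{i^\star}\|\ge\frac{\varepsilon^2\|p_{i^\star}\|}{2(\|p_{i^\star}\|^2+\varepsilon)}$. Since $\mathbf{x}_c\in\mathcal{S}_{i^\star}$ forces $\|p_{i^\star}\|\le r_S(\varepsilon)$, the denominator satisfies $\|p_{i^\star}\|^2+\varepsilon\le r_S(\varepsilon)^2+\varepsilon=\Theta(\varepsilon)$ as $\varepsilon\downarrow 0$. Combining with the upper bound yields
\begin{equation*}
\|p_{i^\star}\|\;\le\;2K\bigl(r_S(\varepsilon)^2+\varepsilon\bigr)\;=\;O(\varepsilon),
\end{equation*}
which is asymptotically much smaller than $\sqrt{0.4\,\varepsilon}$.

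Finally, choose $\varepsilon_6>0$ small enough to simultaneously enforce $\varepsilon_6\le\min\{\varepsilon_0,\varepsilon_2\}$ (so Lemmas~\ref{difference} and~\ref{distance} apply) and $2K(r_S(\varepsilon_6)^2+\varepsilon_6)<\sqrt{0.4\,\varepsilon_6}$. Such an $\varepsilon_6$ exists because the left-hand side is $\Theta(\varepsilon)$ while the right-hand side is $\Theta(\sqrt{\varepsilon})$. The main obstacle is bookkeeping: one must be careful to use the $r_S(\varepsilon)$-bound on $\|p_{i^\star}\|$ inside the denominator to collapse the quadratic inequality $\|p_{i^\star}\|\le 2K(\|p_{i^\star}\|^2+\varepsilon)$ to the linear estimate $\|p_{i^\star}\|=O(\varepsilon)$; had we retained only the naive bound $d_{i^\star}\|p_{i^\star}\|=O(\varepsilon^{3/2})$, the argument would yield only $\|p_{i^\star}\|=O(\sqrt{\varepsilon})$, which is insufficient to beat $\sqrt{0.4\,\varepsilon}$.
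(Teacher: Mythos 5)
Your proof is correct, and it shares the paper's opening moves -- summing the critical-point equation \eqref{equi} so the antisymmetric coupling terms cancel, giving $d_{i^\star}p_{i^\star}=-\sum_{j\neq i^\star}d_jp_j$, and then using Lemma~\ref{distance} ($\|p_j\|>r_{\min}$ for the other active agents) to bound the right-hand side by $O(\varepsilon^2)$ -- but your closing step is genuinely different. The paper converts the bound $d_{i^\star}\|p_{i^\star}\|=O(\varepsilon^2)$ into $\|p_{i^\star}\|<\sqrt{0.4\varepsilon}$ by studying the scalar map $f(s)=d(s)\,s$: it shows $f$ is increasing on $[0,\,3.2\sqrt{\varepsilon})$, verifies via a separate condition (their $\varepsilon_4$) that $\|p_{i^\star}\|$ lies in that interval, and then inverts $f$ by monotonicity against the reference value $f(\sqrt{0.4\varepsilon})$. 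You instead lower-bound $d_{i^\star}$ directly by its second summand and feed in the a priori bound $\|p_{i^\star}\|\le r_S(\varepsilon)=O(\sqrt{\varepsilon})$ coming from $\mathbf{x}_c\in\mathcal{S}_{i^\star}$, which collapses the resulting quadratic inequality to $\|p_{i^\star}\|=O(\varepsilon)$. This buys you two things: a strictly stronger conclusion ($O(\varepsilon)$ rather than merely $<\sqrt{0.4\varepsilon}$), and no need for the monotonicity analysis or the auxiliary smallness conditions it generates; the only care required, which you correctly flag, is that the $\mathcal{S}_{i^\star}$-membership bound is what excludes the large root of the quadratic inequality. The paper's route, in exchange, produces the explicit thresholds $\varepsilon_3,\varepsilon_4,\varepsilon_5$ and the specific function $f$ that are reused later in the reaching-event detector (the constant $\Delta_d=|f(r_{\min})-f(\sqrt{0.4\varepsilon})|$), so its extra machinery is not wasted. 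One minor bookkeeping point: your $\varepsilon_6$ should also be taken below the paper's $\varepsilon_3$-type threshold only if you wanted their monotonicity facts, which you do not; requiring $\varepsilon_6\le\varepsilon_2$ so that Lemma~\ref{distance} applies is the only external dependency you actually need, and you state it.
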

\begin{proof}
By summing~\eqref{equi} for all $i\in \mathcal{N}$, we get 
\begin{equation}\label{distancecond}
d_{i^\star}\, p_{i^\star}=-\sum_{j\neq i^\star, j\in \mathcal{N}_a} d_j\, p_j.
\end{equation}
Consider the scalar function $f(\|p_j\|)=d_j(\|p_j\|)\|p_j\|$ for $\|p_j\|\geq 0$.
It is monotonically increasing for $\|p_j\|\in [0, \, 3.2\sqrt{\varepsilon})$ and decreasing for $\|p_j\|\in [3.2\sqrt{\varepsilon},\, \infty)$.

If $\mathbf{x}_c \in \mathcal{S}_{i^\star}$ for $i^\star\in \mathcal{N}_a$, then $\|\mathbf{x}_c-\mathbf{1}_N\otimes c_{i^\star\text{g}}\|\leq r_S(\varepsilon_2)$. 
Moreover,
$\|\mathbf{x}-\mathbf{1}_N\otimes c_{i^\star \ell}\|\geq 
\|\mathbf{1}_N\otimes x_{i^\star}-\mathbf{1}_N\otimes c_{i^\star\text{g}}\|-
\|\mathbf{x}-\mathbf{1}_N\otimes x_{i^\star}\|
\geq \sqrt{N} \,\|p_{i^\star}\|- \sqrt{(N-1)\varepsilon\sqrt{\varepsilon}\, \xi}.$
This implies
$\|p_{i^\star}\|\leq \sqrt{3\,\varepsilon}+2\sqrt{\varepsilon\sqrt{\varepsilon}\, \xi}.$
Moreover by Lemma~\ref{distance},  $\|p_j\|>r_{\min}$ , $\forall j\neq i^\star$, $j\in \mathcal{N}_a$. 
Thus if $r_{\min}>3.2\sqrt{\varepsilon}$, namely 
\begin{equation}\label{v3}
\varepsilon<0.07\, r^2_{\min}\triangleq \varepsilon_3,
\end{equation}
it holds that $d_j\,\|p_j\|<0.5\varepsilon^2/r_{\min}$, $\forall j\neq i^\star$, $j\in \mathcal{N}_a$. 
Thus $d_{i^\star}\, \|p_{i^\star}\|<0.5(N_a-1)\varepsilon^2/r_{\min}$ by~\eqref{distancecond}. 
If the following two conditions hold:
(i) $\sqrt{3\,\varepsilon}+2\sqrt{\varepsilon\sqrt{\varepsilon}\, \xi}<3.2\sqrt{\varepsilon}$;
(ii)
$0.5(N_a-1)\varepsilon^2/r_{\min}< d_j(\sqrt{0.4\varepsilon})\sqrt{0.4\varepsilon}$, 
then $\|p_{i^\star}\|<\sqrt{0.4\varepsilon}$ since it is shown earlier that function $d_j(\|p_j\|)\|p_j\|$ is monotonically increasing for $\|p_j\|\in [0,\, 3.2\sqrt{\varepsilon})$. Condition~(i) above implies that
$
\varepsilon<4.1/\xi^2\triangleq \varepsilon_4
$
and condition~(ii) holds for all $N_a\leq N$ if
$
\varepsilon<0.8\,r_{\min}^2/(N-1)^2\triangleq \varepsilon_5.
$
To conclude, if $\varepsilon<\varepsilon_6$, where 
\begin{equation}\label{v6}
\varepsilon_6\triangleq \min\{\varepsilon_3, \varepsilon_4, \varepsilon_5\},
\end{equation}
then $\mathbf{x}_c\in \mathcal{S}_{i^\star}$ implies $\|p_{i^\star}\|<\sqrt{0.4\varepsilon}$.
\end{proof}
With the above two lemmas, we can now show that all critical points within $\mathcal{S}$ are stable local minima.
\begin{lemma}\label{localmini}
There exists $\varepsilon_{\min}>0$ such that if $\varepsilon<\varepsilon_{\min}$, 
all critical points of $V$ within $\mathcal{S}$ are  local minima.
\end{lemma}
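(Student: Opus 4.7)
The goal is to show that $\nabla^2 V(\mathbf{x}_c)$ is positive definite at every critical point $\mathbf{x}_c\in\mathcal{S}$, which combined with $\nabla V(\mathbf{x}_c)=0$ makes $\mathbf{x}_c$ a strict local minimum. By Lemma~\ref{distance} every $\mathbf{x}_c\in\mathcal{S}$ lies in a unique $\mathcal{S}_{i^\star}$ for some $i^\star\in\mathcal{N}_a$, and combining Lemmas~\ref{distance} and~\ref{furtherbound} yields the sharp asymmetric split $\|p_{i^\star}\|^2<0.4\,\varepsilon$ versus $\|p_j\|>r_{\min}$ for all other active agents $j\neq i^\star$. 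This split drives every estimate: the distinguished agent $i^\star$ acts as an attractor of strength of order $\varepsilon$, while the remaining active agents contribute only perturbations of order $\varepsilon^2/r_{\min}^2$.

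I would split any test direction $\mathbf{z}\in\mathbb{R}^{2N}$ as $\mathbf{z}=\mathbf{1}_N\otimes z_0 + \tilde{\mathbf{z}}$ with $\sum_i\tilde{z}_i=0$. On the parallel piece, the edge contributions cancel exactly as in Lemma~\ref{away}, leaving $(\mathbf{1}_N\otimes z_0)^T\nabla^2 V(\mathbf{1}_N\otimes z_0) = z_0^T M z_0$ with $M=\sum_{i\in\mathcal{N}_a}(d_i\mathbf{I}_2 + d_i' p_i p_i^T)$. Each $2\times 2$ block $M_i$ has eigenvalues $d_i$ and $d_i+d_i'\|p_i\|^2$; a direct calculation using~\eqref{d} and~\eqref{hp} gives the closed form $d_i+d_i'\|p_i\|^2 = \varepsilon^2\bigl(3\varepsilon^2-6\varepsilon\|p_i\|^2-\|p_i\|^4\bigr)/\bigl(2(\|p_i\|^2+\varepsilon)^3\bigr)$. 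Plugging $\|p_{i^\star}\|^2<0.4\,\varepsilon$ lower-bounds the bracket by $3\varepsilon^2-2.4\varepsilon^2-0.16\varepsilon^2=0.44\,\varepsilon^2>0$, so the smallest eigenvalue of $M_{i^\star}$ is at least of order $\varepsilon$; meanwhile, for $j\neq i^\star$ the bound $\|p_j\|>r_{\min}\gg\sqrt{\varepsilon}$ forces $\|M_j\|_{op}=O(\varepsilon^2/r_{\min}^2)$. Hence $M$ is positive definite with smallest eigenvalue at least $c_1\varepsilon$ for some $c_1>0$ and all sufficiently small $\varepsilon$, giving $(\mathbf{1}_N\otimes z_0)^T\nabla^2 V(\mathbf{1}_N\otimes z_0)\geq c_1\varepsilon\|z_0\|^2$.

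On the orthogonal piece, mimicking the computation from the proof of Lemma~\ref{isolated} and using $h_{ij}'>0$, I would show $\tilde{\mathbf{z}}^T\nabla^2 V\,\tilde{\mathbf{z}}\geq \sum_{i\in\mathcal{N}_a}(d_i+d_i'\|p_i\|^2)\|\tilde{z}_i\|^2 + \tilde{\mathbf{z}}^T(\mathbf{H}\otimes\mathbf{I}_2)\tilde{\mathbf{z}}$. The first sum is bounded below by $-\bigl(N\varepsilon^2/(2r_{\min}^2)\bigr)\|\tilde{\mathbf{z}}\|^2$ via the same closed form (the $i^\star$ term being already nonnegative), and since $G(t)$ is a complete graph at $\mathbf{x}_c$ by Lemma~\ref{difference}(II), Courant-Fischer together with the bound from~\cite{Godsil} yield $\tilde{\mathbf{z}}^T(\mathbf{H}\otimes\mathbf{I}_2)\tilde{\mathbf{z}}\geq (N/r^2)\|\tilde{\mathbf{z}}\|^2$. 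Consequently $\tilde{\mathbf{z}}^T\nabla^2 V\,\tilde{\mathbf{z}}\geq c_2\|\tilde{\mathbf{z}}\|^2$ for some $c_2>0$ once $\varepsilon$ is small enough.

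The cross term $2(\mathbf{1}_N\otimes z_0)^T\nabla^2 V\,\tilde{\mathbf{z}}$ has its edge contributions cancel pairwise by the symmetry $h_{ij}=h_{ji}$ and $x_{ij}x_{ij}^T=x_{ji}x_{ji}^T$, leaving $2\sum_{i\in\mathcal{N}_a} z_0^T M_i\tilde{z}_i$, which is bounded by $C\varepsilon\|z_0\|\|\tilde{\mathbf{z}}\|$ because $\|M_{i^\star}\|_{op}=O(\varepsilon)$ dominates the other blocks. Young's inequality with weight $\lambda=\sqrt{\varepsilon}$ absorbs this into the diagonal pieces, yielding $\mathbf{z}^T\nabla^2 V\mathbf{z}\geq \frac{1}{2}c_1\varepsilon\|z_0\|^2 + \frac{1}{2}c_2\|\tilde{\mathbf{z}}\|^2>0$ for every $\mathbf{z}\neq 0$. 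Setting $\varepsilon_{\min}$ to the minimum of $\varepsilon_1$, $\varepsilon_6$ and the extra thresholds produced along the way closes the argument. The main obstacle I expect is the asymmetric scaling---the Hessian has strength $\varepsilon$ on $\mathcal{Q}$ but strength $1$ on $\mathcal{P}$---which forces Young's weight to be tuned carefully so that the cross term does not overwhelm the weaker direction; a secondary delicate point is that the positivity of $d_{i^\star}+d_{i^\star}'\|p_{i^\star}\|^2$ hinges on the sharp bound $\|p_{i^\star}\|^2<0.4\,\varepsilon$ from Lemma~\ref{furtherbound} and would fail under a looser estimate.
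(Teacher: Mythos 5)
Your proposal is correct, and it reaches the conclusion by a genuinely different route than the paper. The paper never splits the test vector: it bounds the full quadratic form directly, replacing $d_i\|z_i\|^2+d_i'|p_i^Tz_i|^2$ by $(d_i+d_i'\|p_i\|^2)\|z_i\|^2$, keeping only the star of edges at $i^\star$ from the (complete) graph term, and then reducing positivity to the single scalar condition~\eqref{condition11}, whose solution yields the explicit threshold $\varepsilon_7$ in~\eqref{v7}. You instead reuse the $\mathcal{Q}\oplus\mathcal{Q}^{\perp}$ decomposition from Lemma~\ref{isolated} ($\mathbf{z}=\mathbf{1}_N\otimes z_0+\tilde{\mathbf{z}}$ with $\sum_i\tilde z_i=0$), observe that the edge terms vanish on $\mathcal{Q}$ and in the cross term, and control the cross term by Young's inequality. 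Your closed form $d_i+d_i'\|p_i\|^2=\varepsilon^2\bigl(3\varepsilon^2-6\varepsilon\|p_i\|^2-\|p_i\|^4\bigr)/\bigl(2(\|p_i\|^2+\varepsilon)^3\bigr)$ checks out against~\eqref{d} and~\eqref{hp} and in fact reproduces the paper's constants exactly: it gives the lower bound $0.08\varepsilon$ for agent $i^\star$ (using $\|p_{i^\star}\|^2<0.4\varepsilon$ from Lemma~\ref{furtherbound} — note the paper's proof has a sign typo there, writing $\|p_{i^\star}\|>\sqrt{0.4\varepsilon}$) and the $O(\varepsilon^2/r_{\min}^2)$ bound for the other active agents, matching the paper's $\hat g=-2/r_{\min}^2$. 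What your version buys is conceptual clarity — the same two-subspace machinery handles both the saddle points of Lemma~\ref{away}/\ref{isolated} and the minima here, with only the sign of the $\mathcal{Q}$-block flipping — at the cost of a quantitative cross-term/Young step and a less explicit final threshold; the paper's determinant condition~\eqref{condition11} produces a concrete $\varepsilon_7$ directly. One small bookkeeping point: your final $\varepsilon_{\min}$ should also include $\varepsilon_2$ from Lemma~\ref{distance} (needed to place $\mathbf{x}_c$ in a unique $\mathcal{S}_{i^\star}$ and to get $\|p_j\|>r_{\min}$), though this is covered by your "extra thresholds produced along the way."
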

\begin{proof}
A critical point $\mathbf{x}_c\in \mathcal{S}$ can only belong to one set $\mathcal{S}_i$ for $i\in \mathcal{N}_a$ by Lemma~\ref{distance}. Without loss of generality, let $\mathbf{x}_c \in S_{i^{\star}}$, where $i^\star \in \mathcal{N}_a$. In the following we show that $\mathbf{x}_c$ is a  local minimum.

Let $\mathbf{z}\in \mathbb{R}^{2N}$ and $\|\mathbf{z}\|=1$. Set $\mathbf{z}=[z_1^T\;z_2^T\ldots z_n^T]^T$, where $z_i\in \mathbb{R}^2$, $\forall i\in \mathcal{N}$. Then $\mathbf{z}^T \, \nabla^2V\, \mathbf{z}$ at $\mathbf{x}_c$ is computed as:
\begin{equation}\label{quadratic3}
\begin{split}
&\mathbf{z}^T \nabla^2V \mathbf{z}=\sum_{i\in \mathcal{N}_a} \big(d_{i}\,\|z_i\|^2 + d_{i}' \, |p_{i}^Tz_i|^2\big)+ \\
&\sum_{(i,\,j)\in E(t)}\big( h_{ij}\|z_{ij}\|^2 +2\,h'_{ij}\,|x_{ij}^T\,z_{ij}|^2\big).
\end{split}
\end{equation}
where $z_{ij}\triangleq z_i-z_j$. 
Since $|p_i^Tz_i|\leq \|p_i\|\|z_i\|$, $d_i>0$ and $d_i'<0$, it holds that
$
d_{i}\,\|z_i\|^2 + d_{i}' \, |p_{i}^T\,z_i|^2\geq (d_i +d_{i}' \|p_i\|^2)\|z_i\|^2,  \forall i\in \mathcal{N}_a.
$
It can be verified that for $j\neq i^\star$ and $\forall j\in \mathcal{N}_a$, $d_j+d_j'\|p_j\|^2>\varepsilon^2\hat{g}$ where $\hat{g} \triangleq  -2/r^2_{\min}$, since $\|p_j\|>r_{\min}$ by Lemma~\ref{distance}; and $d_{i^\star}+d_{i^\star}'\|p_{i^\star}\|^2>0.08\varepsilon $ since $\|p_{i^\star}\|>\sqrt{0.4\varepsilon}$ by Lemma~\ref{furtherbound}. Regarding the second term of~\eqref{quadratic3}, since Lemma~\ref{difference} shows that $G(t)$ is a complete graph at $\mathbf{x}_c$ with $h_{ij}>1/{r^2}$ and $h_{ij}'>0$, we get
\begin{equation}\label{hbound}
\begin{split}
&\sum_{(i,\,j)\in E}\big( h_{ij} \,\|z_{ij}\|^2 +2\,h'_{ij}\,|x_{ij}^T\, z_{ij}|^2\big)\\
&\geq \sum_{j\in \mathcal{N}} h_{i^\star j} \,\|z_{i^\star j}\|^2 \geq \frac{1}{r^2}\sum_{j\in \mathcal{N}}\,\|z_{i^\star j}\|^2.
\end{split}
\end{equation}
Thus ~\eqref{quadratic3} can be bounded by 
\begin{equation*}\label{quadratic4}
\begin{split}
&\mathbf{z}^T \nabla^2V \mathbf{z}
\geq \sum_{i\in \mathcal{N}_a} \big(d_{i} + d_{i}' \, \|p_{i}\|^2\big)\|z_i\|^2 +\sum_{j\in \mathcal{N}} h_{i^\star j} \,\|z_{i^\star j}\|^2\\
&\geq 0.08\,\varepsilon \|z_{i^\star}\|^2 -\varepsilon^2 \sum_{j\neq i^\star,j\in \mathcal{N}_a}|\hat{g}|\|z_j\|^2+\frac{1}{r^2} \sum_{j\in \mathcal{N}}  \,\|z_{i^\star j}\|^2\\
&\geq  \sum_{j\in \mathcal{N}_a}\big(\frac{1}{r^2}+ \frac{0.08\varepsilon}{N}\big) \|z_{i^\star}\|^2 +\big(\frac{1}{r^2}-\varepsilon^2|\hat{g}|\big)\|z_j\|^2-\frac{2}{r^2} z_{i^\star}^T\, z_j,
\end{split}
\end{equation*}
as $1\leq N_a\leq N$. If the following condition holds:
\begin{equation}\label{condition11}
\begin{split}
\big(\frac{1}{r^2}+ \frac{0.08\varepsilon}{N}\big) \big(\frac{1}{r^2}-\varepsilon^2|\hat{g}|\big)> \big(\frac{1}{r^2}\big)^2,
\end{split}
\end{equation}
it implies 
$
\mathbf{z}^T \nabla^2V \mathbf{z}> (|z_{i^\star}^T\, z_j|-z_{i^\star}^T\, z_j)/r^2\geq 0,
$
$\forall \mathbf{z}\in \mathbb{R}^{2N}$.
Namely, $\nabla^2V$ is positive definite at critical points $\mathbf{x}_c \in \mathcal{S}$. 
Condition~\eqref{condition11} is equivalent to 
$$
\varepsilon^2+\frac{N}{0.08\,r^2}\varepsilon -\frac{1}{r^2|\hat{g}|}<0.
$$
Since $\varepsilon>0$, this implies that 
\begin{equation}\label{v7}
0<\varepsilon <\frac{\sqrt{(\frac{N}{0.08\,r^2})^2+\frac{4}{r^2|\hat{g}|}}-\frac{N}{0.08\,r^2}}{2}\triangleq \varepsilon_7,
\end{equation}
To conclude, if 
\begin{equation}\label{v8}
\varepsilon<\min\{\varepsilon_1, \varepsilon_2,\, \varepsilon_6,\varepsilon_7\}\triangleq \varepsilon_{\min}, 
\end{equation}
where $\varepsilon_1$, $\varepsilon_2$, $\varepsilon_6$ and $\varepsilon_7$ are defined in~\eqref{v1},~\eqref{v2},~\eqref{v6} and~\eqref{v7}, then all local minima within $\mathcal{S}$ are stable. \end{proof}

By summarizing Lemmas~\ref{away}-\ref{localmini}, we can derive the following convergence result:
\begin{theorem}\label{convergence}
Assume that $G(T_s)$ is connected and $\varepsilon<\varepsilon_{\min}$ by~\eqref{v8}. 
Then starting from anywhere in the workspace except a set of measure zero,  there exists a finite time $T_f\in [T_s,\infty)$ and one agent $i^\star\in \mathcal{N}_a$, such that $x_j(T_f)\in \pi_{i^\star\text{g}}$, $\forall j\in \mathcal{N}$, while at the same time  $\|x_i(t)-x_j(t)\|<r$, $\forall (i,\,j)\in E(T_s)$ and $\forall t\in [T_s,\, T_f]$.
\end{theorem}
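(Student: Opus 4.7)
The plan is to combine Theorem~\ref{connectivity} with LaSalle's invariance principle and the Morse-theoretic classification of critical points established in Lemmas~\ref{away}--\ref{localmini}. The distance-maintenance part of the claim is essentially free: since $G(T_s)$ is connected and $\varepsilon<\varepsilon_{\min}<\varepsilon_0<r$, Theorem~\ref{connectivity} gives $E(T_s)\subseteq E(t)$ and hence $\|x_i(t)-x_j(t)\|<r$ for every $(i,j)\in E(T_s)$ and $t\geq T_s$. The same argument shows that $V(t)$ stays bounded, so the closed-loop trajectory $\mathbf{x}(t)$ is confined to a compact sub-level set of $V$ on which only finitely many new edges can ever appear. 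Once the graph stops changing, $V$ is $C^1$ and non-increasing along trajectories by~\eqref{derivative}, so LaSalle's invariance principle applies and $\mathbf{x}(t)$ converges to the largest invariant subset of $\{\dot V=0\}$, which by~\eqref{equi}--\eqref{equiset} is exactly the critical set $\mathcal{C}$.

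Next I would split $\mathcal{C}=(\mathcal{C}\cap\mathcal{S})\cup(\mathcal{C}\cap\mathcal{S}^\neg)$ and rule out convergence to the second component for almost every initial condition. Lemma~\ref{away} shows that no point of $\mathcal{C}\cap\mathcal{S}^\neg$ is a local minimum; Lemma~\ref{isolated} strengthens this, under $\varepsilon<\varepsilon_1$, to the statement that every such point is a \emph{non-degenerate} saddle of $V$. Since the closed-loop system is the negative gradient flow of $V$, the stable manifold theorem then represents the basin of attraction of each saddle as a smooth embedded submanifold of $\mathbb{R}^{2N}$ whose dimension equals the number of negative Hessian eigenvalues, and hence is strictly smaller than $2N$ (because $\nabla^2V$ is positive definite on the complementary subspace $\mathcal{P}$ in the proof of Lemma~\ref{isolated}). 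The saddle points themselves form a locally finite set by non-degeneracy, so their combined stable manifolds have Lebesgue measure zero. Removing this exceptional set of initial conditions, the trajectory must converge to a point $\mathbf{x}_c\in\mathcal{C}\cap\mathcal{S}$.

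Finally I would translate this limit into the desired finite-time conclusion. By Lemma~\ref{distance}(I) the sets $\{\mathcal{S}_i\}_{i\in\mathcal{N}_a}$ are pairwise disjoint, so $\mathbf{x}_c$ belongs to a unique $\mathcal{S}_{i^\star}$ with $i^\star\in\mathcal{N}_a$. Lemma~\ref{localmini} (which requires $\varepsilon<\varepsilon_{\min}$) guarantees that $\mathbf{x}_c$ is actually a strict local minimum, so $\mathbf{x}(t)\to\mathbf{x}_c$. Then Lemma~\ref{distance}(II) yields $x_j^c\in\pi_{i^\star\textup{g}}$ for all $j\in\mathcal{N}$, and in fact the bounds $\|x_j^c-c_{i^\star\textup{g}}\|\leq r_S(\varepsilon)<r_{\min}\leq r_{i^\star\textup{g}}$ place each $x_j^c$ strictly in the interior of the closed ball $\pi_{i^\star\textup{g}}$. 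By continuity of $\mathbf{x}(\cdot)$, there exists a finite $T_f\in[T_s,\infty)$ with $x_j(T_f)\in\pi_{i^\star\textup{g}}$ simultaneously for every $j\in\mathcal{N}$, which is the conclusion of the theorem when combined with the distance bound established in the first paragraph.

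The main obstacle, and the step on which the argument truly depends, is the measure-zero claim for the saddle-point basins. Non-degeneracy from Lemma~\ref{isolated} is precisely what allows an appeal to the stable manifold theorem with a codimension at least one submanifold; if those saddles could be degenerate, their attracting sets could in principle have positive measure and the ``except a set of measure zero'' phrasing would fail. Verifying that the critical points of $V$ inside $\mathcal{S}^\neg$ are isolated (so that the union of stable manifolds is a countable union of zero-measure sets) is the technical point I would be most careful about, using that $V$ is real-analytic away from the singularities of $\phi_c$ and that the connectivity graph stabilises to the complete graph on the limit set by Lemma~\ref{difference}.
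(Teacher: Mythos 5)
Your proposal is correct and follows essentially the same route as the paper: Theorem~\ref{connectivity} for the distance bound, LaSalle's invariance principle to reach the critical set $\mathcal{C}$, Lemmas~\ref{away}--\ref{localmini} to classify the critical points, the non-degeneracy from Lemma~\ref{isolated} to dismiss the saddle basins as measure zero, and Lemma~\ref{distance} to localize the limit in a unique $\mathcal{S}_{i^\star}$ and place all agents in $\pi_{i^\star\text{g}}$. You are in fact somewhat more careful than the paper on two points it glosses over --- the finitely-many-switches argument needed before LaSalle applies, and the explicit stable-manifold justification of the measure-zero claim --- but the underlying argument is the same.
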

\begin{proof}

First of all, the second part follows directly from Theorem~\ref{connectivity} which guarantees that all   edges within $E(T_s)$ will be reserved for all $t>T_s$. Secondly, we have shown that $V(t)$  by~\eqref{Lyapunov} is non-increasing for all $t>T_s$ by Theorem~\ref{connectivity}. By LaSalle's invariance principle~\cite{khalil2002nonlinear} we only need to find out the largest invariant set within $\dot{V}(t)=0$.
Theorems~\ref{away} and~\ref{localmini} ensure that the potential function $V(t)$ has only  local minima inside $\mathcal{S}$ and  saddle points outside $\mathcal{S}$. These saddle points have attractors of measure zero by Lemma~\ref{isolated}.
Thus starting from anywhere in the workspace except a set of measure zero, the system converges to the set of local minima.
Part~(I) of Lemma~\ref{distance} shows that a local minimum can not belong to two different $\mathcal{S}_i$ simultaneously. 
Thus the system converges to the set of local minima within $\mathcal{S}_{i^\star}$ for one active agent $i^\star\in \mathcal{N}_a$.
By part~(II) of Lemma~\ref{distance}, all agents would enter $\pi_{i^\star\text{g}}$, i.e., $x_j\in \pi_{i^\star\text{g}}$, $\forall j\in \mathcal{N}$.
Consequently, there exists $T_f<\infty$ that $x_j(T_f)\in \pi_{i^\star\text{g}}$, $\forall j\in \mathcal{N}$, for exactly one active agent $i^\star\in \mathcal{N}_a$.
\end{proof}
\begin{remark}\label{general}
Note that Theorem~\ref{convergence} holds for any number of active agents with $1\leq N_a\leq N$. In other words, independent of the number of active agents within the team, one active agent will reach its goal region first within finite time, while fulfilling the relative-distance constraints. 
\end{remark}

\subsection{Hybrid Control Structure}

In {Sec.~\ref{synthesis}, we have generated a sequence of progressive goal regions for each agent and in} Sec.~\ref{continuous-design}, we have shown that under the proposed control laws all agents converge to one active agent's {progressive} goal region. In this part, we {propose a local  procedure for each agent to decide on its own activity/passivity. Thus we integrate the discrete plans and the continuous control laws into
a hybrid control scheme to guarantee that every agent's local task is fulfilled. }

\subsubsection{Switching Protocol for sc-LTL}\label{switch}

{Let us first focus on the case when each task $\varphi_i$, $i \in \mathcal N$ is an sc-LTL formula.} As {introduced} in Sec.~\ref{synthesis}, the {discrete} plan $\tau_i $ for agent $i$ {can be represented} by a finite {satisfying prefix of progressive} goal regions in $\Pi_i$ of length $k_i >0$: 
{
$$
\tau_{i,\textup{pre}} = (\pi_{i1},w_{i1})\cdots(\pi_{i{k_i}},w_{ik_i})
$$} 
{We propose the following \emph{activity switching protocol} for each agent $i\in \mathcal{N}$}:

\begin{itemize}\itemsep-0.5ex
\item[(I)] At time $t=0$, agent $i$ {sets $\varkappa_i:= 1$ and itself as} active and sets {$\pi_{i\text{g}}:= \pi_{i\varkappa_i}$}, namely the first goal region in $\tau_i$. The \emph{active} controller~\eqref{law1} is applied, where the progressive goal region is $\pi_{i\text{g}}$.

\item[(II)] Whenever agent $i$ reaches its current {progressive} goal region $\pi_{ig} =  \pi_{i\varkappa_i}$ and $\varkappa_i<k_i$, it provides the  {prescribed set of services $w_{i\varkappa_i}$} and it sets $\varkappa_i:= \varkappa_i+1$ and $\pi_{i\text{g}}:= \pi_{i\varkappa_i}$. {The controller~\eqref{law1} for agent~$i$ is updated while 
the other agents' controllers remain unchanged. }

\item[(III)] Whenever agent $i$ reaches {its last progressive} goal region $\pi_{ig}= \pi_{i{k_i}}${, it provides the set of services $w_{ik_i}$ by which it finishes the execution of its discrete plan.}
Afterwards it remains \emph{passive} and controller~\eqref{law2} applies. 
\end{itemize}

\begin{theorem}\label{satisfyall}
{By following the protocol above, it is guaranteed that $\forall i \in \mathcal N$, $\varphi_i$ is satisfied by $\mathbf{x}_i(T)$, and $\|x_i(t)-x_j(t)\|<r$, $\forall (i,\,j)\in E(0)$ and $\forall t \geq 0$, where $T\rightarrow\infty$.}
\end{theorem}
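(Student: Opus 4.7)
The plan is to prove the two claims of the theorem separately, both by induction over the sequence of switching events produced by the protocol, leveraging Theorems~\ref{connectivity} and~\ref{convergence} at each step. Let $0 = T_0 < T_1 < T_2 < \cdots$ denote the switching instants, where $T_k$ is the time at which the $k$-th agent-reaches-goal event occurs, and let $K_{\textup{tot}} = \sum_{i \in \mathcal{N}} k_i$ denote the total number of progressive goals across all plan prefixes.

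For the distance bound, I would observe that on each inter-switch interval $[T_k, T_{k+1})$ the mode assignments $\{b_i\}$ and the current progressive goal $\pi_{i\text{g}}$ of every active agent are constant, placing us exactly in the setting of Theorem~\ref{connectivity}. Assuming inductively that $G(T_k)$ is connected and that $E(0) \subseteq E(T_k)$, Theorem~\ref{connectivity} yields $E(T_k) \subseteq E(t)$ with $G(t)$ connected for all $t \in [T_k, T_{k+1})$, and by continuity the inclusion also holds at $T_{k+1}$. Iterating in $k$ and invoking Definition~\ref{edge}, one obtains $\|x_i(t)-x_j(t)\|<r$ for every $(i,j)\in E(0)$ and every $t\geq 0$.

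For task satisfaction, I would argue inductively on the number of unvisited progressive goals remaining in the team. Whenever $T_k$ is reached with $\mathcal{N}_a \neq \emptyset$, the connectivity of $G(T_k)$ established above, together with the hypothesis $\varepsilon<\varepsilon_{\min}$, makes Theorem~\ref{convergence} applicable: it produces a finite time $T_{k+1}$ and an active agent $i^\star$ whose state enters its current progressive goal $\pi_{i^\star,\varkappa_{i^\star}}$ at $T_{k+1}$. By the protocol, $i^\star$ then provides the scheduled set of services $w_{i^\star,\varkappa_{i^\star}}\subseteq L_{i^\star}(\pi_{i^\star,\varkappa_{i^\star}})$ and either advances its index or becomes passive; in either case the count of unvisited goals decreases by one. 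After at most $K_{\textup{tot}}$ such events every agent $i$ has produced the word $w_{i1}w_{i2}\cdots w_{ik_i}$, which by construction of $\tau_{i,\textup{pre}}$ is a satisfying prefix of $\varphi_i$. Since sc-LTL satisfaction is preserved under arbitrary suffix extensions of a satisfying prefix, $\mathbf{x}_i(T)$ satisfies $\varphi_i$ as $T\to\infty$.

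The main obstacle is the measure-zero exception embedded in Theorem~\ref{convergence}: each phase admits its own bad set on which the trajectory could stall at a saddle of the corresponding $V$, and a priori the post-switch configuration might lie on such a set for the subsequent phase. I would resolve this by noting that each phase's bad set consists of the stable manifolds of finitely many non-degenerate saddles (Lemma~\ref{isolated}) and hence has codimension at least one; pulling these sets back through the deterministic continuous flows of all preceding phases yields a finite union of codimension-at-least-one subsets in the space of initial conditions $\{x_i(0)\}$, which still has measure zero. Finally, after the last switching event all agents are passive; Theorem~\ref{connectivity}, which (as remarked after its proof) remains valid when $N_a = 0$, then guarantees that connectivity and therefore the distance bound are maintained for all subsequent times, closing the argument.
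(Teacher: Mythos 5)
Your proof is correct and follows essentially the same skeleton as the paper's: induct over the goal-reaching events, invoke Theorem~\ref{convergence} at each stage (valid for any number of active agents, as Remark~\ref{general} notes), and conclude termination from the finiteness of the prefixes $\tau_{i,\textup{pre}}$; the distance bound likewise comes from Theorem~\ref{connectivity} applied phase by phase, including the all-passive tail. The one place where you genuinely go beyond the paper is the measure-zero caveat: Theorem~\ref{convergence} excludes a bad set of initial conditions \emph{per phase}, and since the potential $V$ changes at every switch (a new $\phi_g$ term for the switching agent), each phase contributes its own exceptional set; the paper's proof silently ignores the possibility that the post-switch state lands in the next phase's bad set. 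Your resolution --- pulling each phase's stable manifolds of the non-degenerate saddles (Lemma~\ref{isolated}) back through the preceding flows and noting that the preimage of a measure-zero set under these (finitely many) smooth flow maps is still measure zero --- is sound and actually strengthens the theorem statement, which as written carries no measure-zero qualifier at all; strictly speaking both your proof and the paper's establish the claim only for almost all initial conditions, and it would be worth making that qualifier explicit.
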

\begin{proof}
At $t=0$, all agents are active and following the controller~\eqref{law1}. By Theorem~\ref{convergence}, all agents converge to one agent's goal region at a finite time $t_1>0$. Denote by $i \in \mathcal{N}$ this agent. Then either by step~(II) of the protocol, the agent $i$ updates its active control law by setting $\pi_{i\text{g}}=\pi_{i2}$, or by step~(III) the agent $i$ has completed its plan $\tau_{i,\text{pre}}$ and becomes passive.
{Since all agents' plans are finite and Theorem~\ref{convergence} holds for any number of active agents, we obtain that there exists a finite time instant $T_{f_j}$, at which one of the agents $j \in \mathcal N_a$ finishes executing its plan $\tau_{j,\mathrm{pre}}$, i.e., such that $\varphi_{j}$ becomes satisfied.}
Then by step~(III), this agent is passive and following the controller~\eqref{law2} for all times $t \in [T_{f_j}, \infty)$.
{Inductively, we conclude that there exists a time instant $T_f$, by which all agents complete their plans and all formulas are satisfied. All agents are passive for all $t \in (T_{f},\infty)$
%
and by controller~\eqref{law2} they all converge to one point.}
The second part of the theorem follows directly from Theorem~\ref{convergence}. 
\end{proof}

Note that this protocol is fully decentralized as the decisions on an agents' activity/passivity are local and do not depend on any relative-state measurements.


\subsubsection{Switching Protocol for full LTL}\label{switch-infinite}
As {introduced} in Sec.~\ref{synthesis}, if the {task specification $\varphi_i$ is given as a general LTL formula, then the plan $\tau_i$ is represented} by an infinite sequence of {progressive goal regions in a prefix-suffix form}
\begin{align*}
\tau_{i}  =  \tau_{i,\text{pre}}(\tau_{i,\text{suf}})^{\omega} & = (\pi_{i1},w_{i1})(\pi_{i2},w_{i2})\ldots, \text{where} \\
 \tau_{i,\text{pre}}  &  = (\pi_{i},w_{i1})\ldots(\pi_{i{k_i}},w_{ik_i}), k_i>0 \text{ and}          \\ 
\tau_{i,\text{suf}} & = (\pi_{i{k_i+1}},w_{ik_i+1})\ldots(\pi_{i{K_i}},w_{iK_i}), K_i>0.
\end{align*}

The main challenge in this case is to ensure that each agent visits its progressive goal region infinitely often. The activity switching protocol from Sec.~\ref{switch} could not be applied here since all agents would remain active at all times. As a result, the team may repetitively converge to $\pi_{ig}$ for some $i \in \mathcal N$ while never visiting the other agents' progressive goal regions
 (see Sec.~\ref{sec:example} for an example). Hence, we aim to design a ``fair'' activity switching protocol that enforces a progress towards each agent's task.}
{Thereto, we first introduce a communication-free reaching-event detector that enables an agent to monitor its neighbors' plan executions.}

\medskip

\noindent{\emph{Reaching-Event Detector}}. 
 Agent $i \in \mathcal{N}$ can detect when it reaches its own {progressive} goal region $\pi_{i\text{g}}$ by checking if $x_i(t)\in \pi_{i\text{g}}$. 
For our switching protocol presented below, it is also essential that it can detect when another agent
 $j\in \mathcal{N}$ reaches $\pi_{jg}$. Note that by Lemma~\ref{difference}, the connectivity graph is complete since the first time any agent $i \in \mathcal N$ reaches its progressive goal region $\pi_{ig}$, hence it is sufficient to detect when a neighboring agent $j \in \mathcal N_i(t)$ reaches $\pi_{jg}$.
 

Given that the agents satisfy the dynamics by~\eqref{dynamics} and that each agent $i\in \mathcal{N}$ can measure $x_i(t)-x_j(t)$ for all its neighbors $j\in \mathcal{N}_i(t)$ in real time, we assume that  the agent~$i$ can measure or estimate $u_j(t)$, for all $j\in \mathcal{N}_i(t)$~\cite{Fran08}. 
{Let $\Omega_i(j,\,t)\in \mathbb{B}$ be a Boolean variable indicating that agent~$i$ detects its neighboring agent $j\in \mathcal{N}_i(t)$ reaching the goal region $\pi_{j\text{g}}$ at time $t>0$. We propose the following reaching-event detector inspired by~\cite{Tabu11}. Simply speaking, the detector checks if within a short time period $[t-\Delta_t,\, t]$, there exists $j \in \mathcal N_i(t)$, such that $u_j(t)$ has changed from a relatively small value (below a given $\Delta_u$) by a difference larger than certain $\Delta_d$. If so, it means that the agent $j$ has reached its progressive goal region $\pi_{j\text{g}}$. 

The choice of this reaching-event detector is motivated by the following facts: 
By~\eqref{equi}, all control inputs $u_i(t)$ are close to zero when the system is close to a local minimal, $\forall i\in \mathcal{N}$. 
Afterwards, our switching protocol introduced below guarantees that
\emph{only} agent $j$ switches its control law  either to~\eqref{law1} in order to navigate to the next progressive goal region or to~\eqref{law2} in order to become passive. This change is lower-bounded by constant $\Delta_d$ derived using~control law \eqref{law1} and Lemmas~\ref{distance},~\ref{furtherbound} as
$\Delta_d\triangleq |f(r_{\min})-f(\sqrt{0.4\varepsilon})|$, where $f(\|p_j\|)=d_j(\|p_j\|)\|p_j\|$ is a scalar function and $d_j(\|p_j\|)$ is defined by~\eqref{d}.
In contrast, for the other agents $i\neq j$, $i \in \mathcal{N}$, the control input $u_i(t)$ remains unchanged and close to zero. Hence, agent $j$ is identified as the only one who has reached its progressive goal region.
Formally,

\begin{definition}
$\Omega_i(j,\,t) \define \textup{\texttt{True}}$
if and only if there exists $ t'\in [t-\Delta_t,\, t]$, where $|u_j(t')|<\Delta_u$ and $|u_j(t)-u_j(t')|>\Delta_d$.

\end{definition}


\medskip

\noindent\emph{Activity Switching Protocol}. 
Loosely speaking, in the proposed protocol, an agent $i \in \mathcal N$ becomes passive if it has made a certain progress towards the satisfaction of its specification, hence giving the other agents an opportunity to advance in execution of their plans. However, once each agent has achieved certain progress, the agent $i$ becomes active again to proceed with its infinite plan.
We define a \emph{round} as the time period during which each agent has reached at least one of its goal regions according to their plans. 


\begin{definition}

For all $m \geq 1$, the \emph{$m$-th round} is defined as the time interval $[T_{{\circlearrowleft}_{m-1}},T_{{\circlearrowleft}_{m}})$, where $T_{{\circlearrowleft}_{0}} = 0$, $T_{{\circlearrowleft}_{m-1}} < T_{{\circlearrowleft}_{m}}$ and for all $m \geq 1$, $\round$ is the smallest time satisfying the following conditions $\forall i \in \mathcal N$: $\mathtt{word}_i(T_{\circlearrowleft_{m}}) = w_{i1}\cdots w_{i\ell}$ for some $\ell \geq 1$ and
$\mathtt{word}_i(T_{\circlearrowleft_{m}}) \neq \mathtt{word}_i(T_{\circlearrowleft_{m-1}})$.
\end{definition}


The notion of a round is crucial to the design of the activity switching protocol. To recognize a round completion, we introduce the following variables:
$\chi_i\geq 0$ indicates the starting time of the current round, and
$\Upsilon_i \in \mathbb{Z}^N$ is a vector to record how many progressive goal regions each agent has reached within one round since $\chi_i$.
Although these variables are  locally maintained by each agent. 
By Lemma~\ref{difference}, the connectivity graph is complete since the first time one active agent reaches its  goal region $\pi_{ig}$ and  under the assumption of unbiased measurements, it holds that at the same time instant $\chi_i = \chi_j$, and $\Upsilon_i = \Upsilon_j$,  $\forall i,j \in \mathcal N$.
We propose the following \emph{activity switching protocol} for each agent $i\in \mathcal{N}$:

\begin{itemize}\itemsep-0.5ex
\item[(I)] At time  $t=0$, $\Upsilon_i := \textbf{0}_{N}$, $\chi_i : =0$, $\varkappa_i := 1$. 
The agent $i$ is active and follows control law~\eqref{law1}, where $\pi_{i\text{g}} := \pi_{i\varkappa_i}$. 

\item[(II)]
Whenever the agent $i$ reaches its current progressive goal region $\pi_{ig} = \pi_{i\varkappa_i}$ and waits until $|u_i(t)|<\Delta_u$, it provides the prescribed set of services $w_{i\varkappa_i}$ and updates the current progressive goal region accordingly: If $\varkappa_i < K_i$ then $\varkappa_i := \varkappa_i+1$, and if $\varkappa_i = K_i$ then $\varkappa_i := k_i+1$. Furthermore,  $\pi_{ig} := \pi_{i\varkappa_i}$, and finally $\Upsilon_i[i] := \Upsilon_i[i]+1$.

Generally speaking, the agent $i$ decides to stay {active} or to become {passive} based on the probability function:
$$
\textbf{Pr}(b_i=1)= \begin{cases}
f_{\prob}(\cdot) &  \text{if } f_{\cond}(\cdot) = \texttt{True},\\
0& \text{otherwise,}
\end{cases}
$$
where $f_\prob(\cdot) \in [0,1]$ and $f_\cond(\cdot) \in \{\texttt{True},\texttt{False}\}$ are functions of time $t$ and the local variables $\Upsilon_i$ and~$\chi_i$, subject to the following: given that the current round is the $m$-th one, there exists a time $T \in  (T_{\circlearrowleft_{m-1}},\, T_{\circlearrowleft_{m}})$, such that $f_\cond(\cdot)= \texttt{False}$ for all $t \in [T,\, \round)$.

Whenever $b_i=1$, the agent $i$ keeps following the  control law~\eqref{law1} with the updated $\pi_{ig}$. Otherwise, it becomes passive and the control law~\eqref{law2} is applied.


\item[(III)] Whenever agent $i$ detects that $\Omega_i(j,t)=\texttt{True}$, for some $j \neq i \in \mathcal N$, it sets $\Upsilon_i[j]=\Upsilon_i[j]+1$.

\item[(IV)] Whenever for all $j \in \mathcal N$ it holds that $\Upsilon_i[j]>0$ the agent $i$ sets $\Upsilon_i:=\textbf{0}_{N}$, $\chi_i:=t$ and follows the active control law~\eqref{law1}.  
 \end{itemize}

{A straightforward example of the functions choice in (II) is $f_\cond = \texttt{False}$, for all $t \geq 0$. Then the agent $i$ always becomes passive once it visits $\pi_{ig}$. Note that it becomes active after the current round is completed by step (IV). However, a different selection of the functions may allow for trading the fairness of activity switching for  the increased efficiency of plan executions.
The switching to passive control mode may be temporarily postponed and as a result, the visits to progressive goal regions may become more frequent. An example of such a case is given in Sec.~\ref{sec:example}. }


\begin{lemma}\label{finitetime}
The round $[T_{\circlearrowleft_{m-1}},\round)$ is finite, $\forall m \geq 1$.
\end{lemma}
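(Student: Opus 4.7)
The plan is to prove finiteness by combining the finite-time convergence from Theorem~\ref{convergence} with the eventual forced passivity provided by $f_\cond$. The central observation is that any agent $j$ with $\Upsilon[j]=0$ is still active in the current round: by step~(IV), at $t=T_{\circlearrowleft_{m-1}}$ every agent resets $\Upsilon_i := \mathbf{0}_N$ and follows the active controller~\eqref{law1} (so $N_a=N$ initially), and step~(II) --- the only place where passivity can be selected --- is invoked only upon reaching a progressive goal region. Hence whenever $\Upsilon[j]=0$ for some $j$, there is at least one active agent, so Theorem~\ref{convergence} guarantees that in finite time some active agent reaches its progressive goal region.

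Next I would exploit the property of $f_\cond$ to classify and bound the ``reaching events'' within the round. By assumption, there exists a finite time $T > T_{\circlearrowleft_{m-1}}$ such that $f_\cond = \texttt{False}$ on $[T, \round)$, so $\textbf{Pr}(b_i=1)=0$ and any invocation of step~(II) after $T$ forces the corresponding agent into passivity for the remainder of the round. Thus after $T$, each agent can have at most one further reaching event before becoming permanently passive, yielding at most $N$ additional events; by Theorem~\ref{convergence}, the inter-event times of these events are finite, so the elapsed time after $T$ is bounded.

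Finally, I would conclude by observing that the round terminates precisely when the counter $|\{j : \Upsilon[j]=0\}|$ reaches zero. Agents with $\Upsilon[j]=0$ remain active throughout, and after $T$ each reaching event by an agent with $\Upsilon[\cdot]>0$ permanently removes it from the active set, so after at most $N$ such ``wasted'' events every subsequent reaching event must be by an agent with $\Upsilon[\cdot]=0$ and strictly decreases the counter. Combined with the finite inter-event times from Theorem~\ref{convergence}, this forces $\round$ to be finite. The main obstacle is ruling out Zeno-type accumulation of reaching events on $[T_{\circlearrowleft_{m-1}},T]$; this can be handled by noting that the controls $u_i$ are uniformly bounded on the compact set $\{\|x_{ij}\|\leq r-\delta\}$ (by Theorem~\ref{connectivity}) and that the waiting condition $|u_i|<\Delta_u$ in step~(II) together with the positive separation $2r_{\min}$ between goal centres in Assumption~\ref{region-assump} imposes a uniform positive lower bound on consecutive triggerings of step~(II) for each agent.
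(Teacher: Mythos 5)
Your argument follows the same route as the paper's own proof: it combines the finite-time convergence of Theorem~\ref{convergence} (applicable because some agent with $\Upsilon[\cdot]=0$ remains active until the round ends) with the eventual forced passivity guaranteed by the required property of $f_{\cond}$, and then exhausts the finitely many agents. Your write-up is actually more explicit than the paper's (the invariant that $\Upsilon[j]=0$ implies agent $j$ is still active, the bound of at most $N$ post-$T$ reaching events, and the Zeno discussion), but it is essentially the same proof.
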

\begin{proof}
Let $t=T_{\circlearrowleft_{m-1}} = 0$, and thus $\Upsilon_i[j]=0$, for all $i,j \in \mathcal N$ by step (I). By Theorem~\ref{convergence}, one of the agents reaches its progressive goal region in finite time at $t_1 \geq T_{\circlearrowleft_{j-1}}$. Since there are only finite number of agents and due to the required properties of $f_\cond$, there exists a finite time $T_{f_j} \geq 0$, when either the step (IV) applies or when one of the agents $j \in \mathcal N_a$ necessarily becomes passive by the function $\textbf{Pr}(\cdot)$ in step (II) and remains passive till the end of the round. In the former case $\round=T_{f_j}$, i.e., we directly obtain that the $m$-th round is finite. In the latter case, by inductive reasoning we obtain that there exists a finite time instant $T_f$, such that step (IV) applies, i.e., such that $\round=T_{f}$. Again, we have that $m$-th round is finite. 

Inductively, let $m >1$, $t = T_{\circlearrowleft_{m-1}}$, and $\Upsilon_i[j]=0$, for all $i,j \in \mathcal N$ by step (IV). Using analogous arguments as above, we obtain the existence of a finite $ T{\circlearrowleft_{m}}$.
\end{proof}

\begin{theorem}
By following the  protocol above, it is guaranteed that $\forall i\in \mathcal{N}$, $\varphi_i$ is satisfied by $\mathbf{x}_i(T)$ and $\|x_i(t)-x_j(t)\|<r$, $\forall (i,\,j)\in E(0)$ and $\forall t>0$, where $T \rightarrow \infty$.
\end{theorem}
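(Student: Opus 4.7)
The plan is to decompose the statement into two parts: (a) the continuous relative-distance invariant, and (b) the satisfaction of each LTL task $\varphi_i$. Part (a) is essentially a direct corollary of Theorem~\ref{connectivity} applied piecewise in time. Between any two consecutive switching events (i.e., times at which some agent changes activity mode or updates its progressive goal region) the active/passive partition $(\mathcal N_a,\mathcal N_p)$ is constant, so Theorem~\ref{connectivity} yields $E(T_s)\subseteq E(t)$ on that interval; in particular, no existing edge is ever lost. Across switching instants, the configuration $\mathbf{x}(t)$ is continuous, the newly defined $V(t)$ at the mode switch takes a finite value (the only discontinuities come from the $b_i$ flips in~\eqref{Lyapunov}, which affect only the bounded $\phi_g$ terms), and no agent crosses $\|x_{ij}\|=r$ because that would force $\phi_c(x_{ij})\to\infty$ contradicting boundedness of $V$. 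Gluing these intervals together gives $\|x_i(t)-x_j(t)\|<r$ for all $(i,j)\in E(0)$ and all $t\geq 0$, which is the second claim.

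For part (a), I would argue by induction over the rounds. The key engine is Lemma~\ref{finitetime}, which guarantees that each round $[T_{\circlearrowleft_{m-1}},\round)$ is finite, together with the protocol's bookkeeping: within the $m$-th round, step~(II) increments $\varkappa_i$ (with the wrap $\varkappa_i:=k_i+1$ when the end of the suffix is reached) at least once for every $i\in\mathcal N$, after which step~(IV) resets $\Upsilon_i$ and re-activates agent $i$. So at the end of each round, every agent has actually visited its next progressive goal region and provided the prescribed services $w_{i\varkappa_i}$. Since there are infinitely many rounds, each agent advances through its plan forever, visiting the regions of its suffix $\tau_{i,\text{suf}}$ infinitely often in the exact prescribed order. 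Hence the produced infinite word $\mathtt{word}_i(T)$ equals $w_{i1}\cdots w_{ik_i}(w_{ik_i+1}\cdots w_{iK_i})^\omega$, which satisfies $\varphi_i$ by construction of $\tau_i$; together with the trajectory-word compatibility this gives $\mathbf{x}_i(T)\models\varphi_i$.

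The inductive step requires showing that a new round always starts, i.e., step~(IV) is eventually triggered. Fix $m\geq 1$ and set $t_0=T_{\circlearrowleft_{m-1}}$. By Theorem~\ref{convergence} (which holds for any $1\leq N_a\leq N$, per Remark~\ref{general}), as long as at least one agent remains active, some active agent $j^\star$ reaches its progressive goal region in finite time. Once that happens, by step~(II) agent $j^\star$ either remains active with a new $\pi_{j^\star g}$ or becomes passive; in either case $\Upsilon_{i}[j^\star]$ is updated to a positive value for every $i$ via step~(III), using the reaching-event detector and the fact that after the first visit the connectivity graph is complete (Lemma~\ref{difference}). Repeating this argument finitely many times and invoking the protocol's requirement that $f_\cond$ eventually returns $\texttt{False}$ within a round, I get that all components of $\Upsilon_i$ become positive in finite time, so step~(IV) fires and the $(m{+}1)$-th round begins. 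This is exactly the content of Lemma~\ref{finitetime}, which I would cite directly.

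The main obstacle is the coordination between the probabilistic activity decisions and the reaching-event detector: one must be sure that the detector $\Omega_i(j,t)$ is sound and complete across all switching instants so that every agent's counter $\Upsilon_i$ really agrees with $\Upsilon_j$, and that no agent gets trapped as permanently passive before the round completes. Soundness/completeness of $\Omega_i(j,t)$ follows from the $\Delta_u,\Delta_d$ separation established right before its definition together with the fact that at a local minimum all control inputs are near zero while the update in~\eqref{law1} at a mode switch induces a jump of magnitude at least $\Delta_d$; the no-trap property is precisely the terminal condition imposed on $f_\cond$ in step~(II). Once these two points are carefully invoked, the round completion argument closes and the induction yields satisfaction of $\varphi_i$ for every $i\in\mathcal N$.
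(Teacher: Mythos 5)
Your proposal is correct and follows essentially the same route as the paper: the distance invariant is obtained from Theorem~\ref{connectivity}/Theorem~\ref{convergence} applied across the switching instants, and task satisfaction follows from the finiteness of each round (Lemma~\ref{finitetime}) combined with the correctness of the prefix--suffix plans, iterated over infinitely many rounds. You simply spell out in more detail (the gluing of $V$ across mode switches, the round-counting induction, the role of the reaching-event detector) what the paper's two-sentence proof leaves implicit.
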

\begin{proof}
The satisfaction of $\varphi_i$ follows directly from the correctness of each agent's discrete plan and the fact that each round is finite by Lemma~\ref{finitetime}. 
At last, the distance constraints between neighbouring agents are always maintained 
as shown in Theorem~\ref{convergence}.
\end{proof}

\section{Simulation}\label{sec:example}
In the following case study, we simulate a team of four autonomous robots $\mathcal N = \{\mathfrak{R}_1, \cdots, \mathfrak{R}_4\}$ {subject to the dynamics~\eqref{dynamics} in a bounded, obstacle-free workspace of $40 \times 40$ meters ($m$)}. {Each robot $\mathfrak{R}_i$ is given a local sc-LTL or LTL task $\varphi_i$}. All algorithms and modules {were} implemented in Python 2.7. {Simulations were} carried out on a desktop computer (3.06 GHz Duo CPU and 8GB of RAM) with a simulation stepsize set to $1ms$.  

\begin{figure}[t!]
\begin{minipage}[t!]{0.49\linewidth}
\centering
\includegraphics[width =1\textwidth, height=1\textwidth]{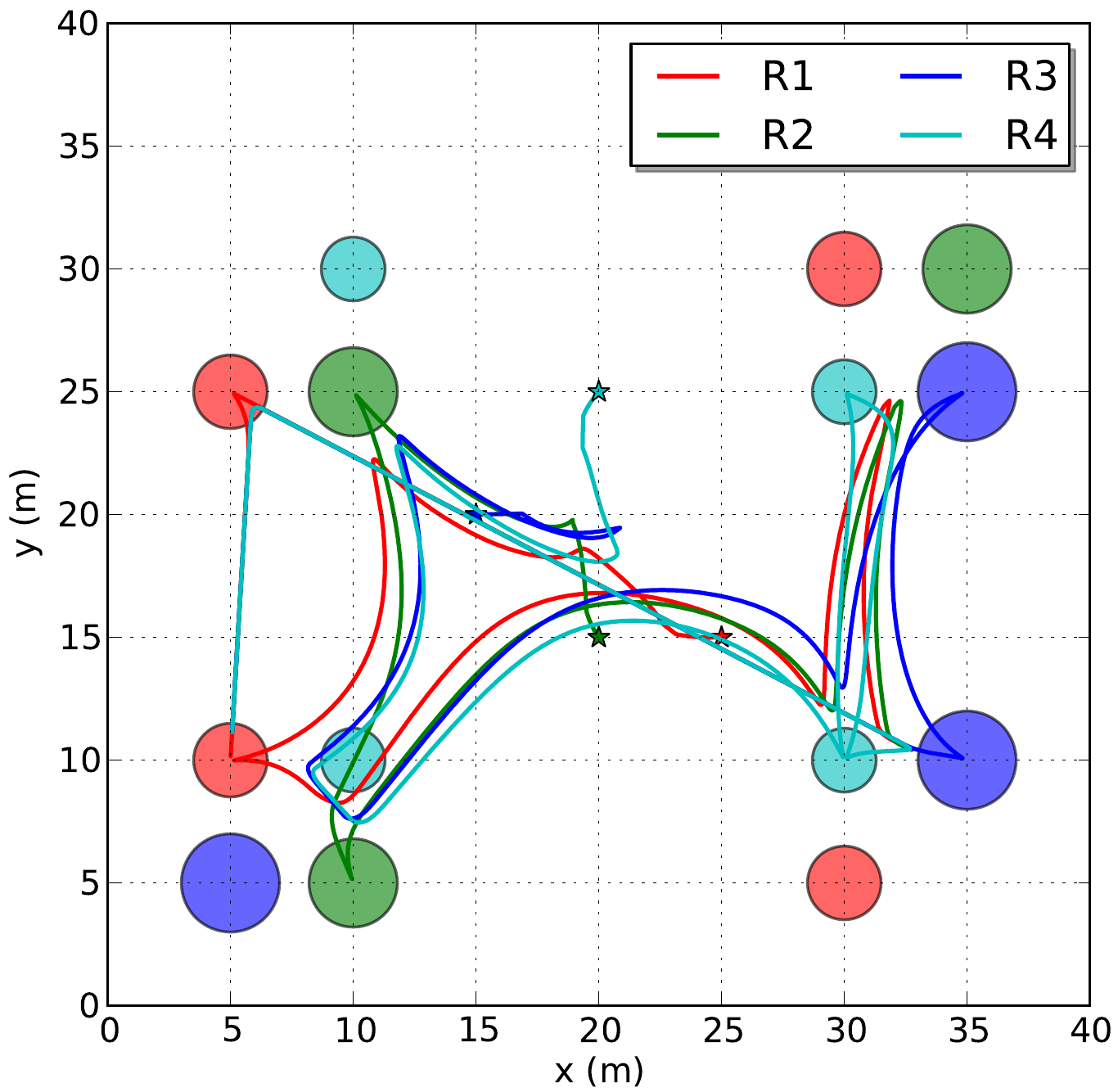}
 \end{minipage}
\begin{minipage}[ht!]{0.49\linewidth}
\centering
   \includegraphics[width =1\textwidth, height=0.54\textwidth]{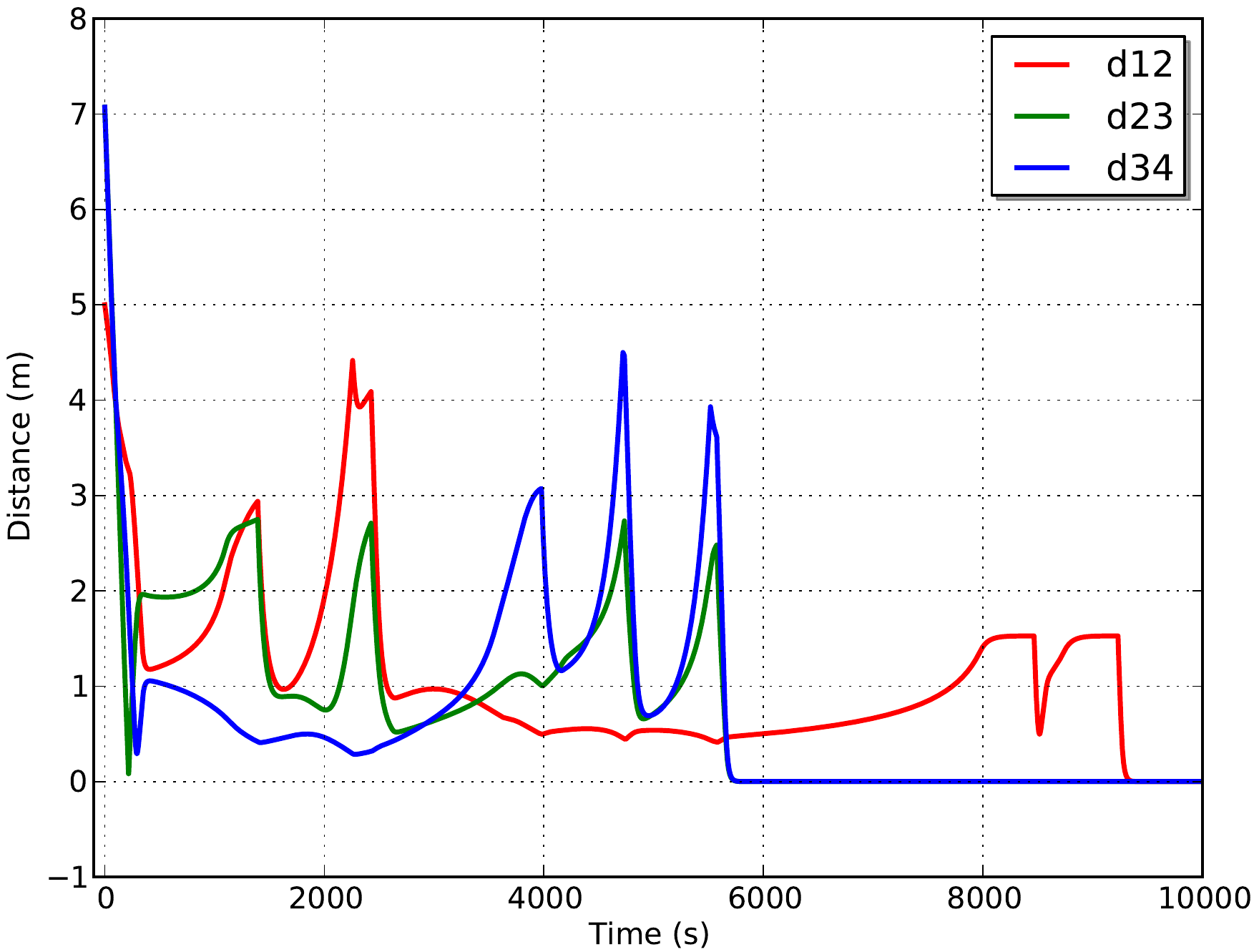}
   \includegraphics[width =0.95\textwidth, height=0.44\textwidth]{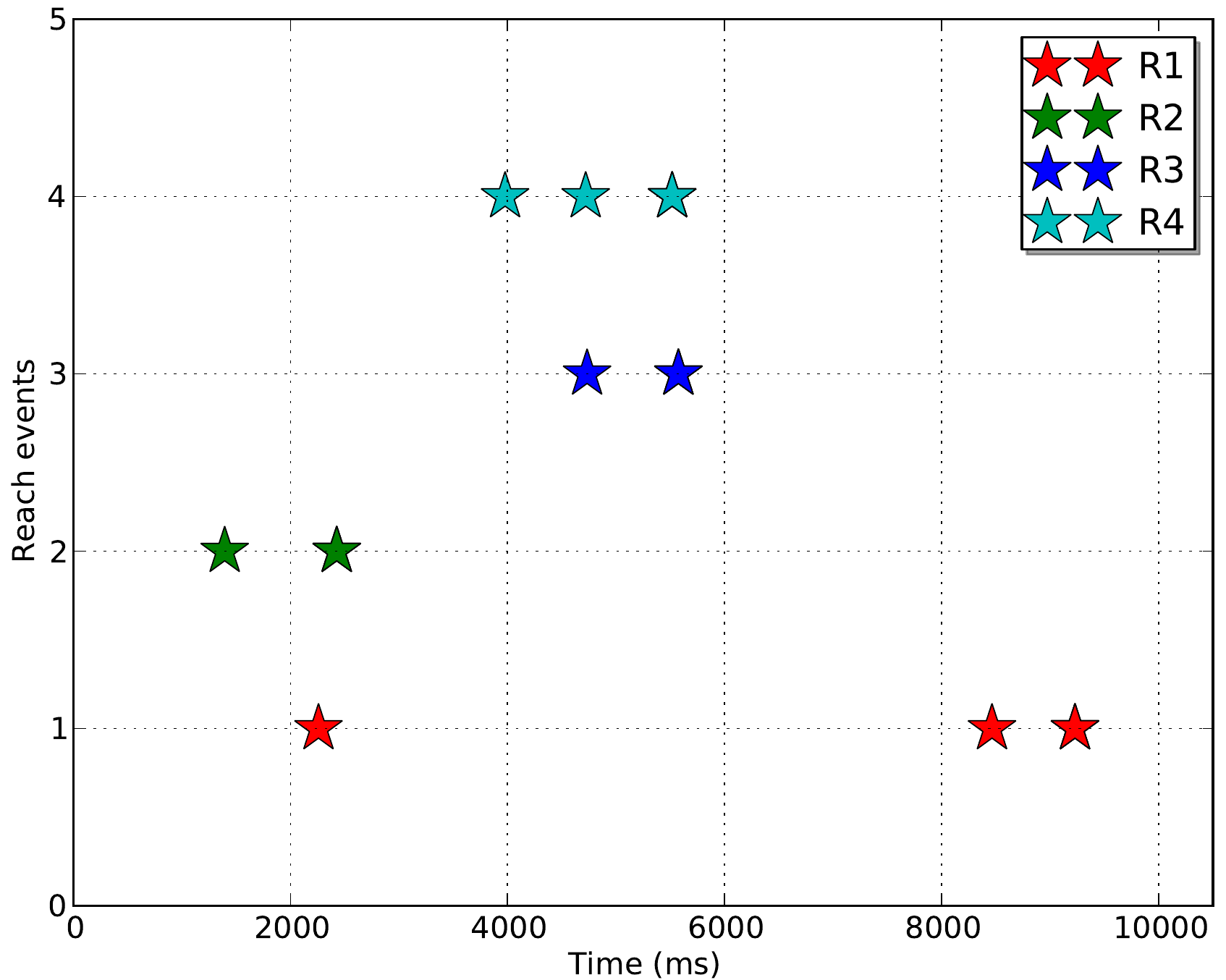}
\end{minipage}
\caption{Left: agents' respective regions of interest in red, green, blue and cyan respectively and their trajectories after execution of switching policy from Sec.~\ref{switch}. All agents accomplish their sc-LTL tasks after $9s$. Top-right: the evolution of pair-wise distances $\|x_{12}\|,\|x_{23}\|, \|x_{34}\|$, which all stay below $7.5m$. Bottom-right: the time instants when the agents reach their goal regions.}
\label{safe_static}
\end{figure}
\medskip
As shown Fig.~\ref{safe_static}, several sphere regions of interest for each agent are placed in top-left, top-right, bottom-right, and bottom-left corners of the workspace and they all satisfy Assump.~\ref{region-assump} with $c_{\max}=40$ and $r_{min}=2$:
\begin{itemize}\itemsep-0.5ex
\item  $\Pi_1 = \{\varpi_{1\tl},\varpi_{1\tr},\varpi_{1\br},\varpi_{1\bl}\}$  shown in red;
\item  $\Pi_2 = \{\varpi_{2\tl}, \varpi_{2\tr}, \varpi_{2\bl}\}$  shown in green; 
\item $\Pi_3 = \{\varpi_{3\tr}, \varpi_{3\br}, \varpi_{3\bl}\}$ shown in blue;
\item $\Pi_4 = \{\varpi_{4\tl}, \varpi_{4\tr}, \varpi_{4\br}, \varpi_{4\bl}\}$ shown in cyan.
\end{itemize}

The respective sets of atomic propositions (services) are $\AP_1 = \{\sigma_{11},\sigma_{12}\}$; $\AP_2 = \{\sigma_{21},\sigma_{22}, \sigma_{23}\}$; $\AP_3 = \{\sigma_{31},\sigma_{32}, \sigma_{33}\}$; and $\AP_4 = \{\sigma_{41},\sigma_{42}\}$. The regions are labeled as follows:
$L_1(\varpi_{1\tl})=L_1(\varpi_{1\br}) = \{\sigma_{11}\}$, $L_1(\varpi_{1\tr})=L_1(\varpi_{1\bl}) = \{\sigma_{12}\}$;  
$L_2(\varpi_{2\tl})= \{\sigma_{21}\}$,  $L_2(\sigma_{2\tr})=\{\sigma_{22}\}$, $L_2(\varpi_{2\bl})=\{\sigma_{23}\}$; 
$L_3(\varpi_{3\tr})= \{\sigma_{31}\}$,  $L_3(\varpi_{3\br})=\{\sigma_{32}\}$, $L_3(\varpi_{3\bl})=\{\sigma_{33}\}$; and finally
$L_4(\varpi_{4\tl})=L_4(\varpi_{4\tr}) = \{\sigma_{41}\}$, $L_4(\varpi_{4\bl})=L_4(\varpi_{4\br}) = \{\sigma_{42}\}$.
The agents start from $[25,15]$, $[20,15]$, $[15,20]$, and $[20,25]$, respectively. 
The uniform neighboring radius is $r=8m$ and the design parameter needed in Def.~\ref{edge} is $\delta=0.5m$.
The edge set of $G(0)$ is hence $E(0)=\{(\mathfrak{R}_1,\mathfrak{R}_2),(\mathfrak{R}_2,\mathfrak{R}_3)$, $(\mathfrak{R}_3,\mathfrak{R}_4)\}$. 
The upper bound by~\eqref{v8} is $\varepsilon < \varepsilon_{\textrm{min}} \approx 0.031$ and we choose $\varepsilon=0.03$. 


We consider two  cases of the agent task specifications: one {with} sc-LTL formulas and one {with} general LTL formulas.

\medskip

\noindent
\emph{sc-LTL Task Specifications.}
The local task of agent $\mathfrak{R}_1$ to provide service $\sigma_{12}$, then $\sigma_{11}$ and at last again $\sigma_{12}$. The corresponding LTL formula is $\varphi_1 = \Diamond(\sigma_{12} \wedge \Diamond (\sigma_{11} \wedge \Diamond \sigma_{12}))$.
Agent $\mathfrak{R}_2$ is asked to provide service $\sigma_{21}$ or $\sigma_{22}$ and service $\sigma_{23}$ in any order, formalized as $\varphi_2^s= \Diamond(\sigma_{21} \vee \sigma_{22})\wedge \Diamond \sigma_{23}$.  
The task of agent $\mathfrak{R}_3$ is to provide service $\sigma_{31}$ or $\sigma_{32}$ and service $\sigma_{33}$ in any order, formalized as $\varphi_2^s= \Diamond(\sigma_{31} \vee \sigma_{32})\wedge \Diamond \sigma_{33}$. 
Finally, agent $\mathfrak{R}_4$ is required to provide service $\sigma_{42}$, then $\sigma_{41}$ and at last again service $\sigma_{42}$, represented by the LTL formula $\varphi_4 =\Diamond(\sigma_{42} \wedge \Diamond (\sigma_{41} \wedge \Diamond \sigma_{42}))$.

The synthesized discrete plans are as follows: 
\begin{itemize}\itemsep-0.5ex
\item $\tau_{1} = (\varpi_{1\bl},\{\sigma_{12}\})(\varpi_{1\tl},\{\sigma_{11}\})(\varpi_{1\bl},\{\sigma_{12}\})$
\item $\tau_{2} = (\varpi_{2\tl},\{\sigma_{21}\})(\varpi_{2\bl},\{\sigma_{23}\})$
\item $\tau_{3} = (\varpi_{3\tr},\{\sigma_{31}\})(\varpi_{3\br},\{\sigma_{33}\})$
\item $\tau_{4} = (\varpi_{4\br},\{\sigma_{41}\})(\varpi_{4\tr},\{\sigma_{42}\})(\varpi_{4\tl},\{\sigma_{41}\})$
\end{itemize}

At $t=0$, the switching policy from Sec.~\ref{switch} is applied.  
The agent trajectories are shown in Fig.~\ref{safe_static}, where the distances between the   neighboring agents along with times of reaching the agents' progressive goal regions are  plotted, too. 



\medskip
\begin{figure}[t]
\begin{minipage}[t!]{0.49\linewidth}
\centering
\includegraphics[width =1\textwidth, height=1\textwidth]{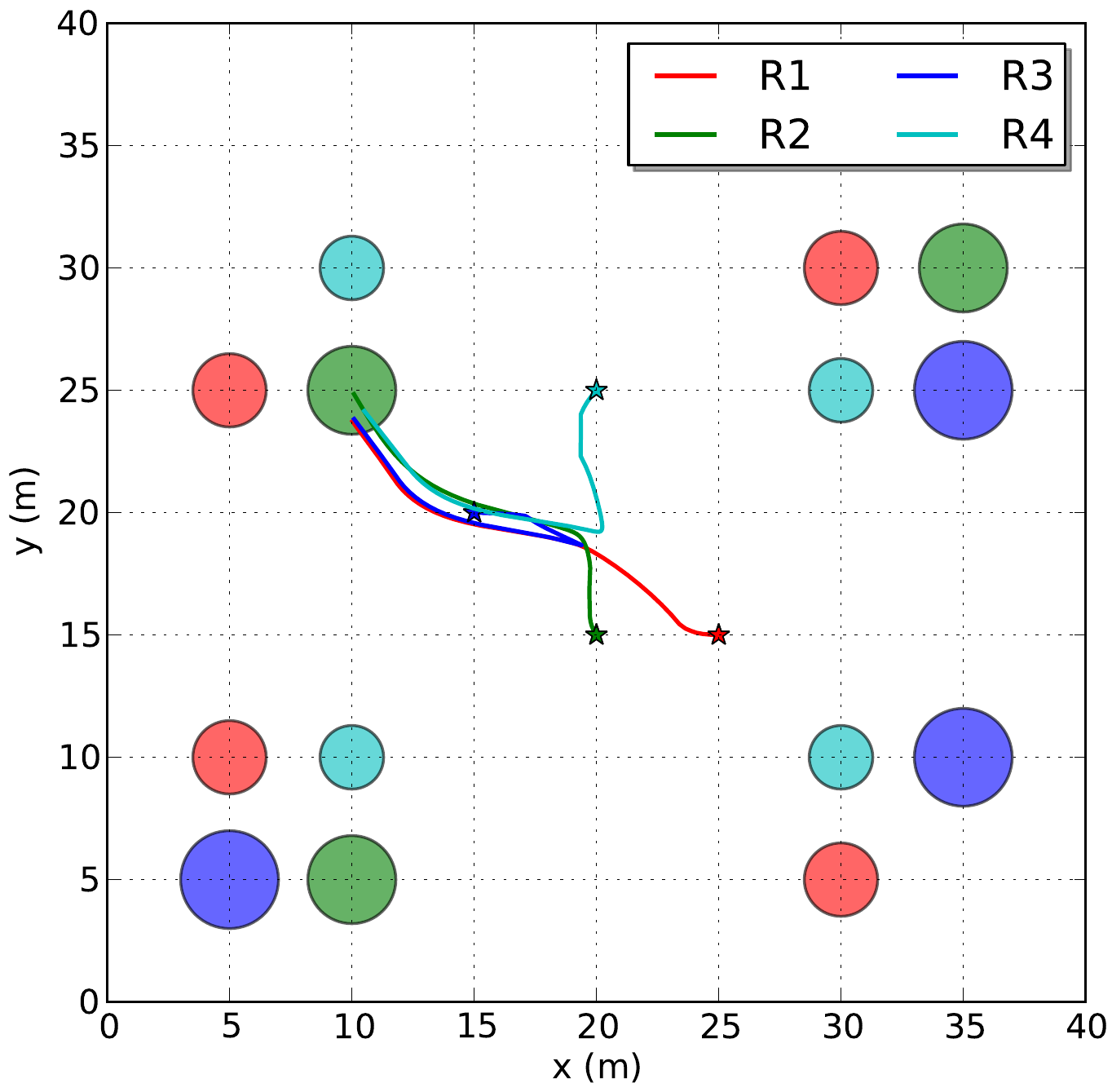}
 \end{minipage}
\begin{minipage}[ht!]{0.49\linewidth}
\centering
   \includegraphics[width =1\textwidth, height=0.54\textwidth]{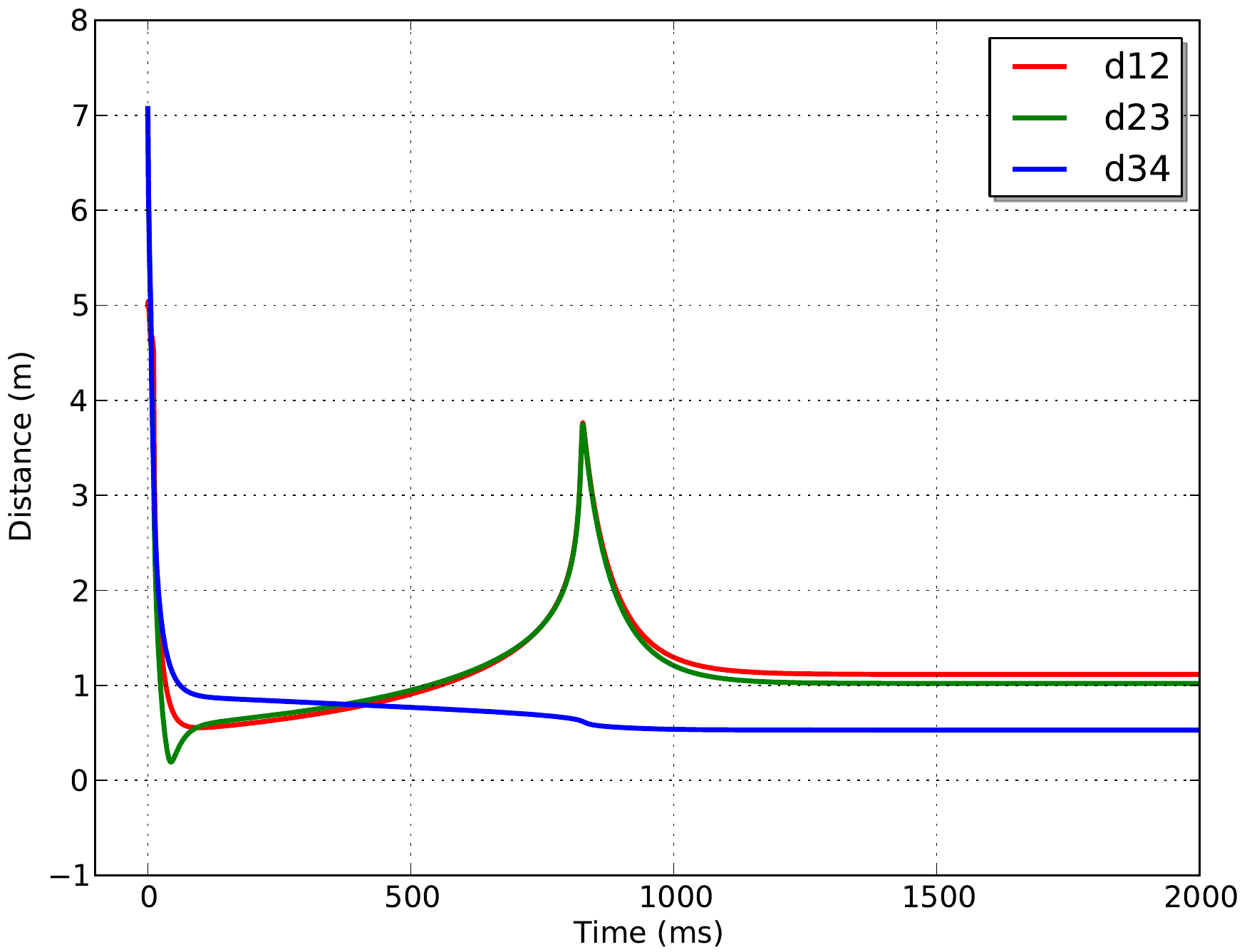}
   \includegraphics[width =0.95\textwidth, height=0.44\textwidth]{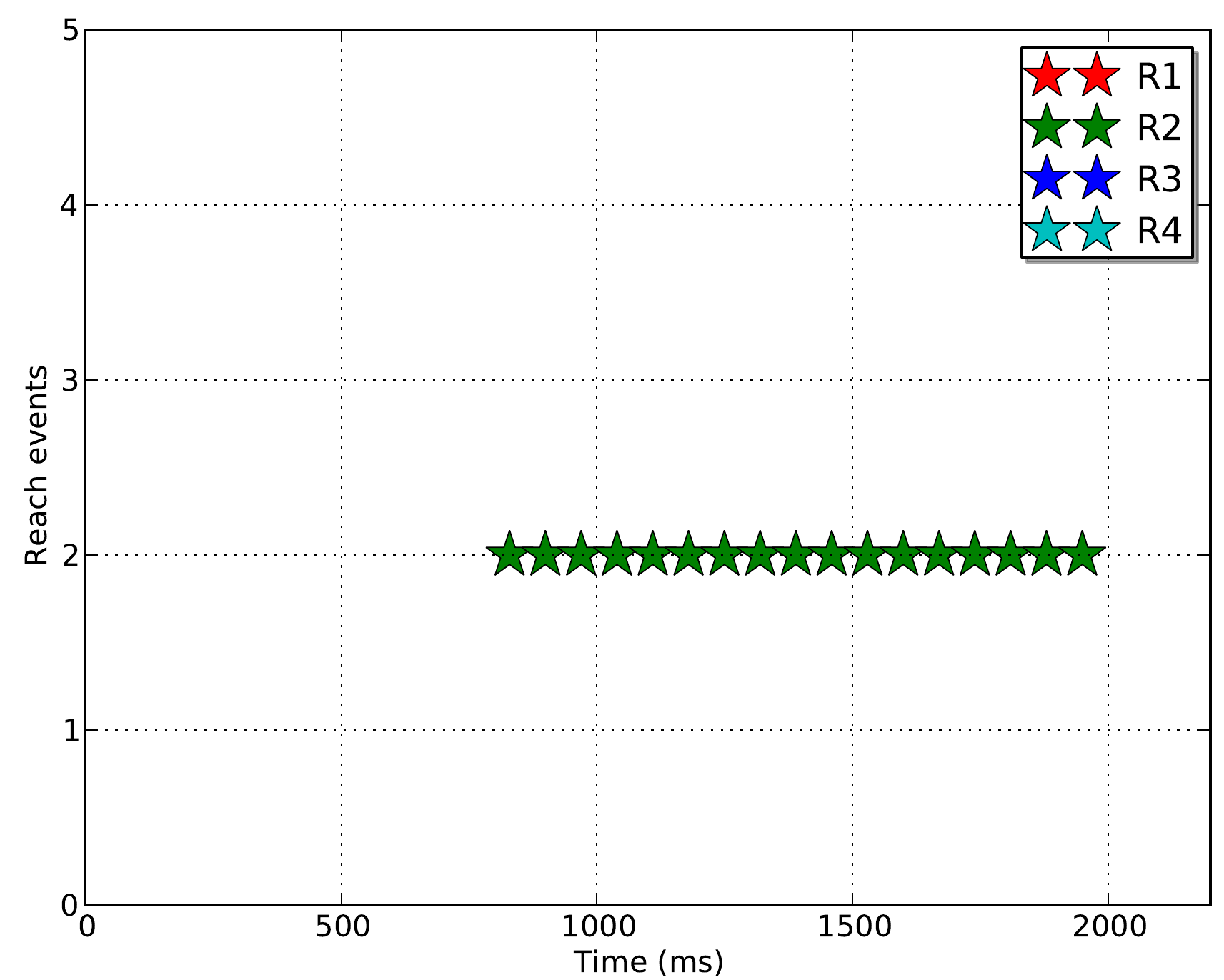}
\end{minipage}
\caption{The system with general LTL formulas after application of switching policy from Sec.~\ref{switch}. All agent converge to $\varpi_{2\tl}$ and stop there due to the fact that $\mathfrak{R}_2$ remains active with  an infinite plan to stay at  $\varpi_{2\tl}$.}
\label{live_stuck}
\end{figure}

\noindent \emph{General LTL task specifications}. 
The task of agent $\mathfrak{R}_1$ to periodically provide both services $\sigma_{11}$ and $\sigma_{12}$, represented by $\phi_1 = \square \Diamond \sigma_{11} \wedge \square \Diamond \sigma_{12}$. 
The task of agent $\mathfrak{R}_2$ is to periodically provide one of the services $\sigma_{21}$ or $\sigma_{22}$ or $\sigma_{23}$, formalized as $\phi_2= \square \Diamond(\sigma_{21} \vee \sigma_{22} \vee \sigma_{23})$. Finally, the tasks of agents $\mathfrak{R}_3$ and $\mathfrak{R}_4$ are 
$\phi_3= \square \Diamond(\sigma_{31} \vee \sigma_{32} \vee \sigma_{33})$, and $\phi_4 = \square \Diamond \sigma_{41} \wedge \square \Diamond \sigma_{42}$.

The synthesized discrete plans are as follows: 
\begin{itemize}\itemsep-0.5ex
\item $\tau_{1} = \big((\varpi_{1\bl},\{\sigma_{12}\})\big((\varpi_{1\tl},\{\sigma_{11}\})\big)^\omega$
\item $\tau_{2} = (\varpi_{2\tl},\{\sigma_{21}\})^\omega$
\item $\tau_{3} = (\varpi_{3\bl},\{\sigma_{33}\})^\omega$
\item $\tau_{4} = \big((\varpi_{4\br},\{\sigma_{41}\})(\varpi_{4\tr},\{\sigma_{42}\})\big)^\omega$
\end{itemize}



{
First, we simulated the scenario where we applied the activity switching protocol for sc-LTL formulas proposed in Sec.~\ref{switch}. Fig.~\ref{live_stuck} shows that the first progressive goal visited is $\varpi_{2\tl}$. Since the agent $\mathfrak{R}_2$ stays active by the protocol, and its next progressive goal region is again $\varpi_{2\tl}$, the whole system has reached its stable local minimum. Hence, all agents converge very close to one point and stop. 
In contrast, the activity switching protocol from Sec.~\ref{switch-infinite} avoids such an unwanted behavior.

The simulation results for the activity switching protocol from Sec.~\ref{switch-infinite} are illustrated in Fig.~\ref{live_static}.
The functions $f_{\prob}$ and $f_{\cond}$ were chosen in a way that allows to partially trade fairness of activity switching for increased efficiency of plan executions measured in terms of the distance traveled between consecutive visits to progressive goal regions. More specifically, an agent is not switched to passive immediately after it reaches one of its goal region. Rather than that, it has the following probability of remaining active:
$$
\textbf{Pr}(b_i=1)= \begin{cases}
e^{-\alpha_i\Upsilon_i[i](t-\chi_i)},&\quad \text{if}\quad \Upsilon_i[i]\cdot (t-\chi_i)<\bar{\chi}_i,\\
0,&\quad \text{if}\quad \Upsilon_i[i]\cdot (t-\chi_i)\geq \bar{\chi}_i,
\end{cases}
$$
where $\bar{\chi}_i = 5$, and $\alpha_i = 1$. The probability of remaining active decreases with the increasing time elapsed since the current round  started and with the increasing number agent $\mathfrak{R}_i$'s own progressive goal region was visited. Note that there exists a finite $T \in (T_{\circlearrowleft_{m-1}},T_{\circlearrowleft_{m}})$, such that $\Upsilon_i[i]\cdot (t-\chi_i)\geq \bar{\chi}_i$ for all $t \in [T, \round]$, hence each agent $\mathfrak{R}_i$ is guaranteed to be switched to passive control mode eventually.

The selected function does not necessarily yield a monotonic decrease of the total number of active agents in the team and is particularly useful when one agent has a set of goal regions whose locations are close.

\begin{figure}[t]
\begin{minipage}[t!]{0.49\linewidth}
\centering
\includegraphics[width =1\textwidth, height=1\textwidth]{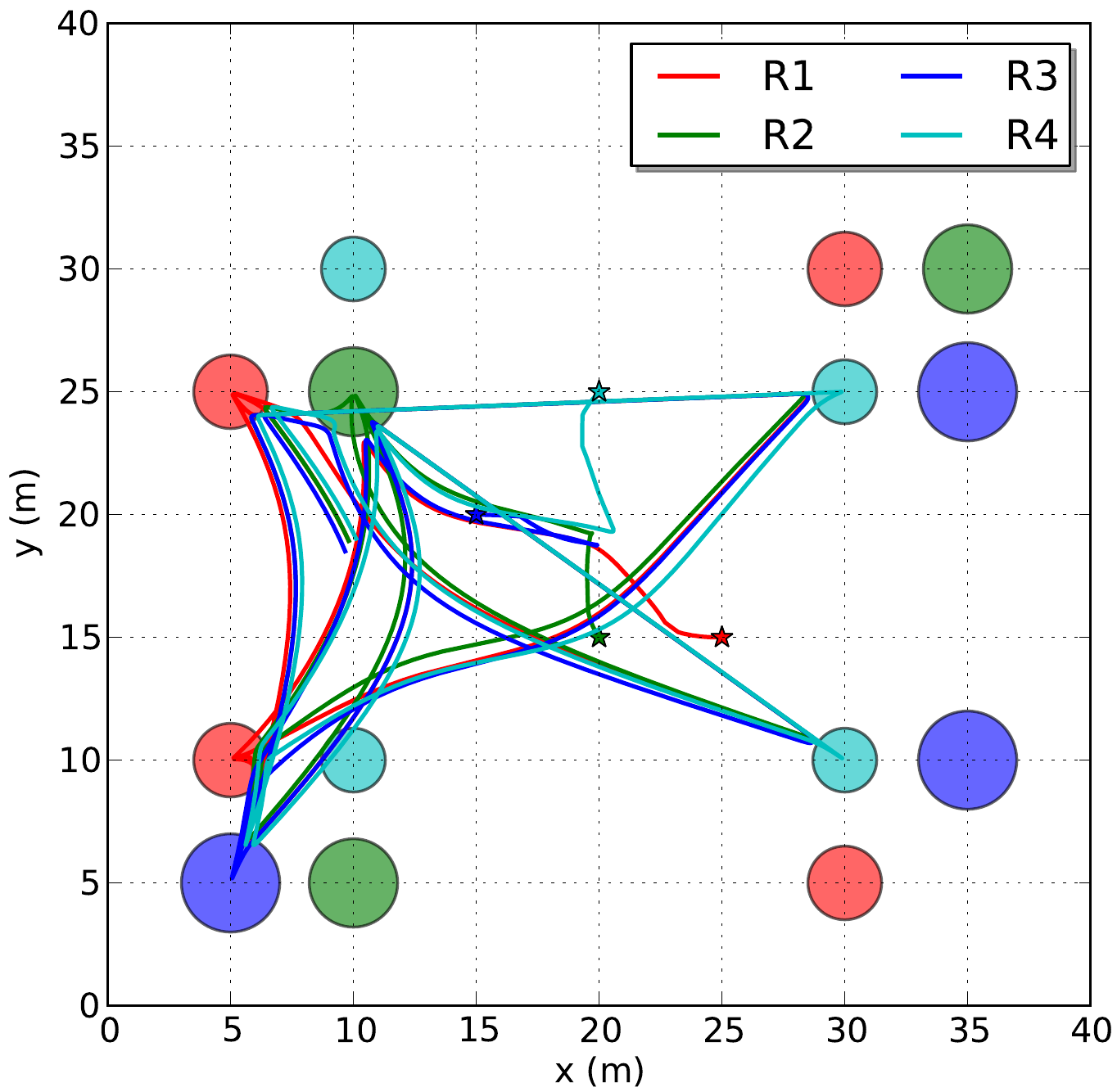}
 \end{minipage}
\begin{minipage}[ht!]{0.49\linewidth}
\centering
   \includegraphics[width =1\textwidth, height=0.54\textwidth]{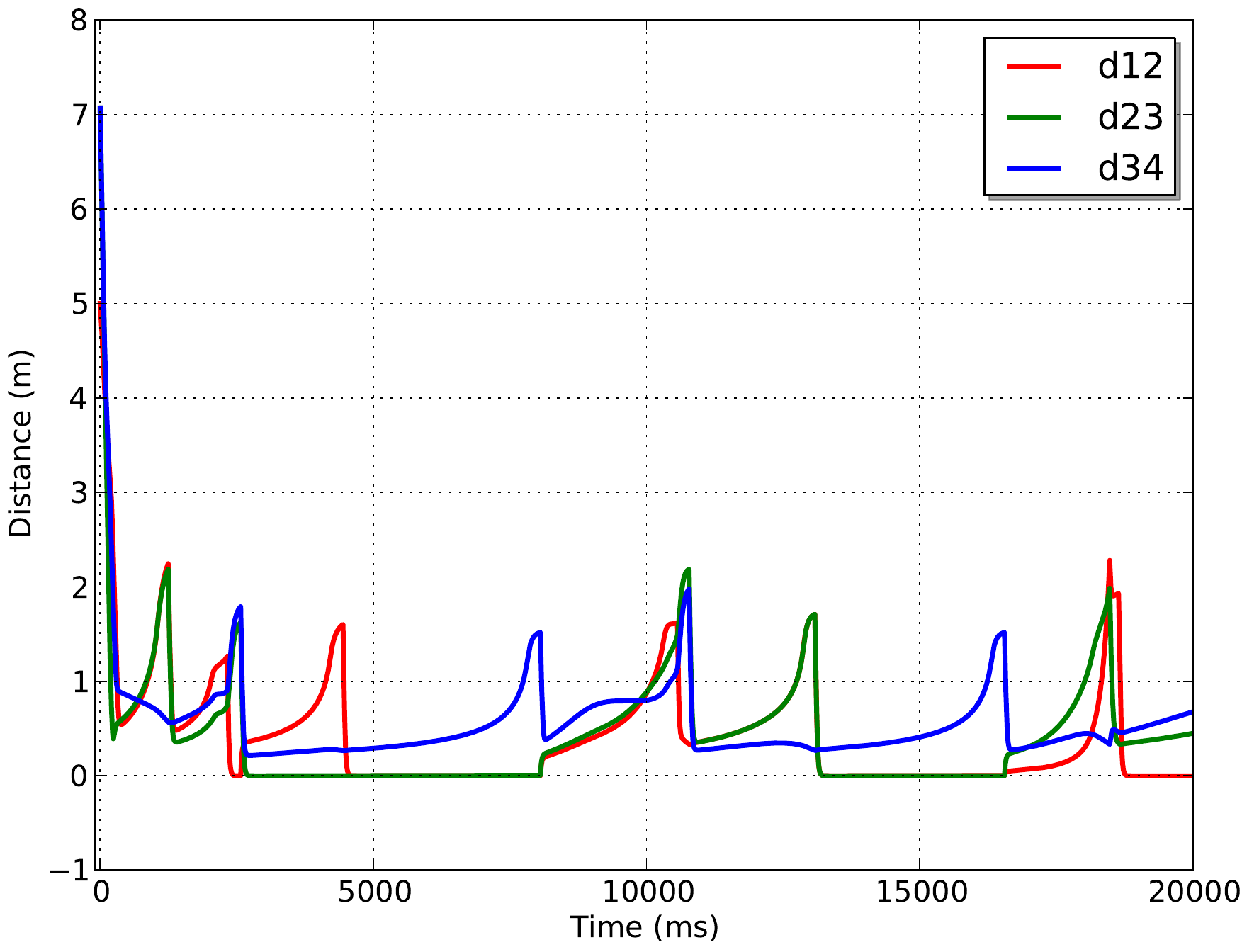}
   \includegraphics[width =0.95\textwidth, height=0.44\textwidth]{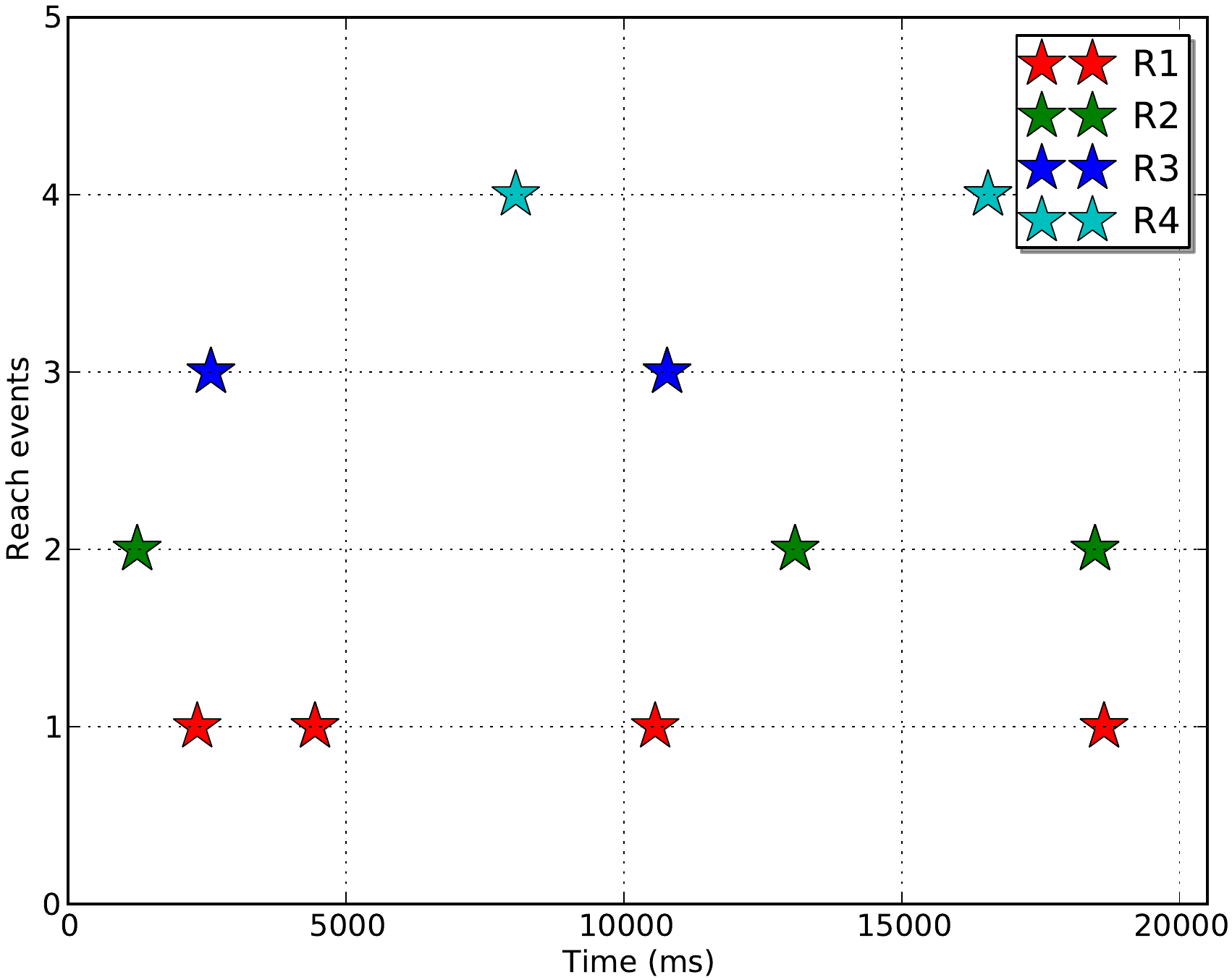}
\end{minipage}
\caption{Left: agents' respective regions of interest in red, green, blue and cyan respectively and their trajectories after execution of switching policy from Sec.~\ref{switch-infinite} for $20$s. Top-right: the evolution of pair-wise distances $\|x_{12}\|,\|x_{23}\|, \|x_{34}\|$, which all stay below $7.5m$. Bottom-right: the time instants when the agents reach their goal regions according to their plan.}
\label{live_static}
\end{figure}

\section{Conclusion and Future Work}\label{sec:conc}
We proposed a distributed communication-free hybrid control scheme for multi-agent systems to fulfil locally-assigned tasks as general or sc-LTL formulas, while at the same time subject to relative-distance constraints. 

Future work plans include handling uncertainties in the relative state measurements and considering more complex agent dynamics. We also plan to relax the requirement on the completness of the graph $G(t)$.

\section{Acknowledgements}
{This work was supported by   EU STREP RECONFIG: FP7-ICT-2011-9-600825 and the Swedish Research Council.}

\end{document}